\renewcommand{\Tr}{{\operatorname{Tr}}}
\newcommand{\id}{\operatorname{id}}
\newcommand{\proj}[1]{\left|#1\right\rangle\!\left\langle#1\right|}
\newcommand{\1}{\openone}
\newcommand{\ox}{\otimes}
\theoremstyle{plain}
\newtheorem{theorem}{Theorem}
\newtheorem{lemma}[theorem]{Lemma}
\newtheorem{definition}[theorem]{Definition}  
\newtheorem*{remark*}{Remark}   
\renewcommand\qedsymbol{$\blacksquare$}
\newenvironment{proof-of}[1][{\hspace{-\blank}}]{{\medskip\noindent\textit{Proof~{#1}.\ }}}{\hfill\qedsymbol}
\begin{document}


\title{New Protocols for Conference Key\protect\\ and Multipartite Entanglement Distillation%
\thanks{Date: 25 February 2025. An earlier version of this work was presented at ISIT 2022 \cite{FSAW:GHZ-mixed-ISIT}. The present paper has full proof details, additional protocols and added discussion.}%
\thanks{FS is supported by 
a Walter Benjamin Fellowship, DFG project no.~524058134. FS also acknowledges support by the DFG Cluster of Excellence ``Munich Center for Quantum Science and Technology'' (MCQST). AW is supported by the European Commission QuantERA grant ExTRaQT (Spanish MICIN project PCI2022-132965), by the Spanish MICIN (projects PID2019-107609GB-I00 and PID2022-141283NB-I00)) with the support of FEDER funds, by the Spanish MICIN with funding from European Union NextGenerationEU (PRTR-C17.I1) and the Generalitat de Catalunya, by the Spanish MTDFP through the QUANTUM ENIA project: Quantum Spain, funded by the European Union NextGenerationEU within the framework of the ``Digital Spain 2026 Agenda,'' by the Alexander von Humboldt Foundation, and the Institute for Advanced Study of the Technical University Munich.}}

\author{Farzin Salek\orcidlink{0000-0002-4222-3654} and Andreas Winter\orcidlink{0000-0001-6344-4870}%
\thanks{Farzin Salek is with the Department of Mathematics, Technical University of Munich, and the Munich Center for Quantum Science and Technology (MCQST), Germany (email: farzin.salek@gmail.com).}%
\thanks{Andreas Winter is with ICREA, with the Grup d'Informació Quàntica, Departament de Física, Universitat Autònoma de Barcelona, Spain, with the Institute for Advanced Study, Technical University of Munich, Germany, and with the Quantum Information Independent Research Centre Kessenich (QUIRCK), Germany (email: andreas.winter@uab.cat).}}

\maketitle

\begin{abstract}
We approach two interconnected problems of quantum information processing in networks: Conference key agreement and entanglement distillation, both in the so-called source model where the given resource is a multipartite quantum state and the players interact over public classical channels to generate the desired correlation.  
The first problem is the distillation of a conference key when the source state is shared between a number of legal players and an eavesdropper; the eavesdropper, apart from starting off with this quantum side information, also observes the public communication between the players. 
The second is the distillation of Greenberger-Horne-Zeilinger (GHZ) states by means of local operations and classical communication (LOCC) from the given mixed state. 
These problem settings extend our previous paper [IEEE Trans. Inf. Theory 68(2):976-988, 2022], and we generalise its results: using a quantum version of the task of communication for omniscience, we derive novel lower bounds on the distillable conference key from any multipartite quantum state by means of non-interacting communication protocols. Secondly, we establish novel lower bounds on the yield of GHZ states from multipartite mixed states. Namely, we present two methods to produce bipartite entanglement between sufficiently many nodes so as to produce GHZ states.
Next, we show that the conference key agreement protocol can be made coherent under certain conditions, enabling the direct generation of multipartite GHZ states. 
\end{abstract}

\begin{IEEEkeywords}
Secret key distillation, 
conference key,
entanglement distillation
\end{IEEEkeywords}

\section{Introduction and preliminaries}
\IEEEPARstart{A}{pplications} of quantum networks to produce correlations among designated parties are  among the most mature in quantum information.
Conference key agreement in particular is a fundamental task whose objective is to allow multiple parties within a network to leverage quantum properties such as entanglement and superposition to establish a common secret key.
Considerable research effort has been devoted to the study of bipartite secret key and entanglement in quantum networks \cite{1362905,PhysRevA.72.052317,PhysRevLett.93.230504}.
Before delving further into the topic, let us first establish some necessary notation. Any additional conventions required will be introduced as we come across them in our discussion.
The capital letters $X$, $T$, etc., denote random variables, whose realizations are shown by the corresponding lowercase ($x$, $t$, etc.) and whose alphabets (ranges) are shown by calligraphic letters ($\mathcal{X}$, $\mathcal{T}$, etc.), respectively. Quantum systems $A$, $B$, etc., are associated with (finite-dimensional) Hilbert spaces denoted with the same letter, whose dimensions are denoted by $|A|$, $|B|$, etc. Multipartite systems $AB\ldots Z$ are described by tensor product Hilbert spaces $A\otimes B\otimes \cdots \otimes Z$.
For any positive integer $m$, we use the notation $[m]=\{1,\ldots,m\}$. For conciseness, we denote the tuple $(X_{1},\ldots,X_{m})$ by $X_{[m]}$, and similarly for (block) indices in superscript. Moreover, for a subset $J\subset[m]$ of indices, we write $X_{J}=(X_{j}: j\in J)$. 
Throughout the paper, $\log$ denotes by default the binary logarithm. The trace norm (aka Schatten or non-commutative $1$-norm) is $\|\omega\|_1 = \Tr\sqrt{\omega^\dagger\omega} = \max_{\|\Lambda\|_\infty\leq 1} \Tr\omega\Lambda$.
The purified distance between possibly sub-normalised quantum state $\rho$ and $\sigma$ is defined as $P(\rho,\sigma)=\sqrt{1-F(\rho,\sigma)^2}$, with the generalized fidelity $F(\rho,\sigma)=\left\|\sqrt{\rho}\sqrt{\sigma}\right\|_1 + \sqrt{(1-\Tr\rho)(1-\Tr\sigma)}$. Note that if at least one of the states is normalised, the fidelity reduces to its familiar form $F(\rho,\sigma)=\left\|\sqrt{\rho}\sqrt{\sigma}\right\|_1$. 

Extraction of keys from a pair of random variables $X_{1}$ and $X_{2}$ secret from another random variable $Z$ was studied by Maurer \cite{Maurer:key}. Ahlswede and Csisz\'{a}r \cite{243431} in particular introduced and solved the so-called one-way communication protocols, where only either the party holding $X_1$ or the one holding $X_2$ can broadcast a message over the public noiseless channel. The latter paper presented the optimal rate of this task, which is given by a single-letter expression involving the difference between certain conditional mutual information of the random variables and auxiliary random variables. 

The extensive development of applications of quantum networks involves using genuine multipartite quantum protocols, whose aim it is to share multipartite secret key and entanglement among many players 
\cite{Das:network,AH:mE,MGKB,Streltsov-et-al}. 
Secret key agreement in a classical network with $m$ players 
was studied by Csisz\'ar and Narayan \cite{Csiszar-Narayan.2004} using 
an approach called communication for omniscience (CO): 
$m$ players observe a correlated discrete memoryless multiple source 
$X_{[m]} = (X_{1},\ldots,X_{m})$, the $j$-th node obtaining
$X_{j}$. The nodes are allowed to communicate interactively over a public noiseless 
broadcast channel so that at the end they attain omniscience: each node reconstructs 
the whole vector of observations $X_{[m]}$. 
The objective is to minimise the overall communication to achieve this goal. The key observation was that players can achieve omniscience through non-interactive communications, wherein each player only needs to transmit a single message to others based on its local information.

Quantum systems can exhibit intricate correlations that cannot be fully understood using classical intuition alone.
Measuring the amount and type of non-classical correlations present in a quantum system, provides insights into the degree of quantum entanglement within the state. However, quantifying entanglement and characterizing its properties pose considerable challenges and require sophisticated mathematical tools and techniques. 
There are only a few instances where the entanglement content of a state is fully understood. One such example involves the asymptotic limit of many copies of a bipartite pure state $\ket{\psi}^{AB}$: not only can it be transformed into EPR states $\ket{\Phi} = \frac{1}{\sqrt{2}}(\ket{0}\ket{0}+\ket{1}\ket{1})$ at a rate of $E(\psi)=S(A)_\psi$ using local operations and classical communication (LOCC), but remarkably, the same rate governs the reverse transformation from $\phi$ back to $\psi$ \cite{PhysRevA.53.2046}. Here, $S(A)_{\rho}=-\Tr\rho^{A}\log\rho^{A}$ denotes the von Neumann entropy of the reduced state of the quantum state $\rho^{AB}$. Another illustrative example involves a tripartite stabilizer state $\ket{\psi}^{ABC}$ distributed among three remote parties, each holding multiple qubits. In a notable study \cite{BFG}, it was demonstrated that this state can be transformed by local unitaries into a combination of EPR pairs, GHZ states $\ket{\Gamma_3} = \frac{1}{\sqrt{2}}(\ket{0}\ket{0}\ket{0}+\ket{1}\ket{1}\ket{1})$, and local one-qubit states. The quantities of EPR and GHZ states depend on the dimensions of specific subgroups within the stabilizer group. The authors further provide a formula for determining the maximum number of tripartite GHZ states that can be extracted from $\ket{\psi}^{ABC}$ through the use of local unitaries.

The picture becomes significantly less clear when dealing with mixed states \cite{RevModPhys.81.865}, as even a complete understanding of classical correlations in $\rho^{AB}$ is lacking.
A notable exception arises in the extraction of secret keys from $\rho^{AB}$ under the so-called one-way communication protocols, for which a formula, though multi-letter, is known: by generalizing the one-way communication protocol of Ahlswede and Csisz\'{a}r \cite{243431} to bipartite quantum states in \cite{2005RSPSA.461..207D},
the authors formulated the optimal rate of keys distillable from $\rho^{ABE}$ by one-way communication between users holding systems $A$ and $B$ secret from the eavesdropper who holds system $E$, and the result is given as the regularization $K_\rightarrow(\rho) =\lim_{n\rightarrow\infty} \frac1n K^{(1)}_\rightarrow\left(\rho^{\otimes n}\right)$ of the following single-letter formula \cite[Thm.~8]{2005RSPSA.461..207D}:
\begin{align*}
  K^{(1)}_\rightarrow(\rho) = \max_{Q,T-X} \left\{I(X;B|T)-I(X;E|T)\right\}.
\end{align*}
Here, the maximisation is over all POVMs $Q=\{Q_{x}\}_{x\in\mathcal{X}}$ on $A$
and classical channels $r(t|x)$, and $I(X:B|T)$ and $I(X:E|T)$ are conditional quantum mutual informations of the state 
\[\begin{split}
\omega^{TXBE} 
 =\sum_{t,x}r(t|x) \ketbra{t}^{T} &\otimes\ketbra{x}^{X} \\ &\otimes\Tr_{A}\rho^{ABE}\left(Q_{x}\otimes\1_{BE}\right)
\end{split}\]
and its marginals:
\[\begin{split}
  I(X:B|T)_\omega 
  &= S(XT)_\omega + S(BT)_\omega - S(T)_\omega - S(XBT)_\omega.  
\end{split}\]
In the same paper, it was demonstrated that, under certain additional conditions, the key distillation protocol can be made coherent (this aspect will be discussed extensively in Sec.~\ref{make-coherent}). As a result, the paper establishes the following achievable rate of EPR pairs from any bipartite state $\rho^{AB}$, known as the coherent information:
\begin{align*}
    I(A\rangle B)_\rho = -S(A|B) = S(B)_\rho - S(AB)_\rho,
\end{align*}
which is the negative conditional quantum entropy.
This single-letter expression holds considerable importance in various fields of quantum information. Regrettably, akin to the secret key rate, a regularized expression represents the highest rate of distillable EPR pairs from any bipartite state. 
Specifically, consider an instrument $\mathcal{E}=\{\mathcal{E}_x\}_x$ on system $A$, where $\mathcal{E}_x$ are completely positive (cp) maps sending $A$ to the joint output of quantum system $A'$ and classical system $X$, i.e. $\mathcal{E}_x:A\mapsto A'\otimes X$, such that their sum is trace-preserving. For any bipartite state $\rho^{AB}$, the one-way distillable entanglement can be expressed as the regularization 
\(
    D_\rightarrow(\rho) = \lim_{n\rightarrow\infty}\frac{1}{n}D^{(1)}_\rightarrow(\rho^{\ox n}),
\)
with
\begin{align*}
    D^{(1)}_\rightarrow(\rho)\coloneqq \max_{\mathcal{E}}\sum_{x\in\mathcal{X}}p(x)
    I(A\rangle B)_{\rho_x^{A'B}},
\end{align*}
where the maximization is over quantum instruments 
$\mathcal{E}=\{\mathcal{E}_x\}_x$ and
\begin{align*}
    p(x) = \Tr\mathcal{E}_x(\rho) \quad \text{and} \quad \rho_x^{A'B}
    = \frac{1}{p(x)}(\mathcal{E}_x\ox \id_B)\rho^{AB}.
\end{align*}
While it is possible to bound the range of $X$ by $|A|^2$, and each cp map can be assumed to have only one Kraus operator, the expression is not generally computable due to the infinite copy limit involved.
In the same paper, the authors also proved another multi-letter formula for the two-way communications secret key rate. 

A very different setting for the extraction of EPR pairs is one where all players assist two distinguished players through LOCC to distill EPR pairs between themselves.
This problem was initially explored in \cite{1998quant.ph..3033D} under the name of \emph{entanglement of assistance} for a pure initial state. In the asymptotic regime, a formula was discovered for up to $4$ parties in \cite{PhysRevA.72.052317}, and subsequently, extended to an arbitrary number of parties in \cite{state-merging}, always for a pure initial state.
The optimal rate for this problem takes a particularly simple form given by the minimum-cut bipartite entanglement: given  the pure state $\ket{\psi}^{A_1\ldots A_m}$, the entanglement of assistance between two parties $A_i$ and $A_j$ equals 
\[
  E_A(A_i:A_j|\psi) = \min_{I \subseteq [m]\setminus\{i,j\}} S(A_{i}A_I)_\psi.
\]

The setting was later extended to mixed states in \cite{Dutil-Hayden}, where lower and upper bounds are derived. We will delve into this case in greater detail in Sec. \ref{merging}. Another problem concerns the distillation of EPR pairs between a distinguished party and the rest of the parties. This particular problem is referred to as \emph{entanglement combing} and has been fully solved for the case of pure states in \cite{PhysRevLett.103.220501}. In Sec. \ref{merging}, we explore a generalization of this problem to mixed states by employing the mixed state entanglement of assistance \cite{Dutil-Hayden,Dutil:PhD}.

The generalization of the CO problem to quantum networks, where the $j$-th
player instead of a random variable $X_j$ observes the subsystem $A_j$ of 
a multipartite quantum state $\rho^{A_1\ldots A_m}$, 
leads quite naturally to the problem of existence of a simultaneous decoder for the 
classical source coding with quantum side information at the decoder \cite{WINTER:PHD}.
By finding a simultaneous decoder, the present authors generalized the CO
problem to quantum networks and derived novel lower bounds on the distillation of common randomness 
(CR) and Greenberger-Horne-Zeilinger (GHZ) states from multipartite quantum states \cite{GHZ:ISIT2020,FSAW:GHZ}. 
More precisely, we studied distillation of CR from mixed states $\rho^{A_1\ldots A_m}$
and GHZ states from pure multipartite states $\ket{\psi}^{A_1\ldots A_m}$.
The present work generalizes the results of the former paper in two directions: 
first, we construct new protocols for distillation of CR secret against an eavesdropper 
with quantum side information,
i.e. a secret key shared by $m$ players when they have access to many copies of a 
mixed state $\rho^{A_1\ldots A_mE}$, where subsystem $E$ is the quantum information provided to an eavesdropper. Secondly,
we devise two distinct protocols employing state merging and entanglement of assistance to convert any quantum states into EPR pairs between different parties, subsequently enabling their transformation into GHZ states. This line of investigation yields two novel lower bounds. Additionally, we show, in a constructive manner, how the conference key agreement protocol can be made coherent under additional conditions, leading to the creation of GHZ states at novel rates.
Almost all of our protocols are from a subclass of protocols called \emph{non-interactive communication}, because each player broadcasts only one message that depends only on their local measurement data. 
 

\section{Conference key distillation}
%
%
%
%

Here we consider \emph{secret key distillation} in the source model. This means that we have $m+1$ separated players sharing $n \gg 1$ copies of an $(m+1)$-partite quantum state $\rho^{A_{1}\ldots A_{m}E}$, i.e.~legitimate player $j\in[m]$ holds the subsystem $A_j^n$ and the eavesdropper holds the subsystem $E^{n}$.
All legitimate players can communicate to each other through a public noiseless classical broadcast channel of unlimited capacity, but the eavesdropper gets a copy of all these communications. The most general definition of the secret key agreement protocols in the bipartite case was given in \cite{renner:phd,Wilmink2003} and in an important and different form in \cite{pbit-PRL,pbit-IEEE}.
In the multipartite case see \cite[Def.~5]{GHZ:ISIT2020} and \cite{FSAW:GHZ}, which are concerned with common randomness distillation rather than secret key agreement, but a secrecy condition can be added easily (see below).
We now define the most general non-interactive protocol for distilling an $(m+1)$-partite state $\rho^{A_1\ldots A_mE}$ into a secret key between $m$ players.

\begin{definition}
\label{cq-code}
A code (or protocol) for non-interactive secret key agreement consists of the following:
\begin{enumerate}
\item An instrument consisting of cp maps $\mathcal{E}^{(j)}_{\ell_j}:A_j^n\rightarrow A'_j$ for each player $j\in[m]$ {acting on $n$ blocks of $A_j$}  
(with quantum and classical registers $A'_j$ and $X_j$, respectively);
\item A POVM {acting on $A'_j$} $(D_{k_j}^{(j,\ell_{[m]})}:k_j\in\mathcal{K})$, for each player $j\in[m]$ and $\ell_{[m]}\in\mathcal{L}_1\times\cdots\times\mathcal{L}_m$.
\end{enumerate}
The idea is that each player $j$ applies their instrument to their share $A_j^n$ of the $n$ copies of initial multipartite mixed state, and broadcasts the outcome $\ell_j$. After receiving all messages from the other players, so that they all share knowledge of $\ell_{[m]}=(\ell_1,\ldots,\ell_m)$, each player $j$ will measure the POVM $D^{(j,\ell_{[m]})}$. 
The resulting state of the key for all players and the side information of the eavesdropper, the latter holding $L_{[m]}$ and $E^n$, is then
\begin{align*}
   &\Omega^{K_1\ldots K_mL_{[m]}E^n} \\
   &\phantom{.}
    =\!\!\! \sum_{k_1\ldots k_m \atop \ell_1\ldots\ell_m} \proj{k_1}^{K_1} \ox \cdots \ox \proj{k_m}^{K_m} \ox \proj{\ell_{[m]}}^{L_{[m]}} \\
   &\phantom{====}
    \ox \Tr_{A_{[m]}'} \left( (\mathcal{E}^{(1)}_{\ell_1}\ox\cdots\ox\mathcal{E}^{(m)}_{\ell_m}\ox\id_{E^n})\rho^{A_1^n\ldots A_m^nE^n}\right)\!\! \\
    &\phantom{==========} \times \left(D^{(1,\ell_{[m]})}_{k_1}\ox\cdots\ox  D^{(m,\ell_{[m]})}_{k_m}\ox\1_{E^n}\right)\!.
\end{align*}

For technical reasons we assume that communication of the $j$-th player has a rate $R_j$, i.e. $|\mathcal{L}_j|\leq 2^{nR_j}$, for some constants $R_j$.
We call this an $(n,\varepsilon)$-protocol if 
\begin{align}
  \label{reliability}
  \Pr\{K_1=\ldots =K_m\} &\geq 1-\varepsilon,\\
\label{secrecy}
 \frac{1}{2}\left\|\Omega^{K_1L_{[m]}E}-u_{K_1}\otimes\tau_{0}^{L_{[m]}E^n}\right\|_{1} &\leq \varepsilon,
\end{align}
where $u_{K_1}=\frac{1}{|\mathcal{K}|}\sum_{k_1}\ketbra{k_1}$ and $\tau_{0}$ is some constant state.

We call $R$ an achievable rate if for all $n$ there exist $(n,\varepsilon)$-protocols 
with $\varepsilon\rightarrow 0$ and $\frac{1}{n}\log|\mathcal{K}|\rightarrow R$. Finally we define the non-interactive secret-key capacity of $\rho$ as
\begin{align*}
K_{\text{n.i.}}(\rho) \coloneqq \sup\{R: R \text{ achievable}\}.
\end{align*}
\end{definition}

The restriction to non-interactive communication is supposed to simplify the problem, while providing some added generality with respect to one-way distillation protocols, but Definition \ref{cq-code} is still too general for us to handle. To state the following results, we consider a subclass of protocols where first each player $j$ applies the same instrument $(\mathcal{E}^{(j)}_{x_j}:A_j\rightarrow A_j')$ to each copy of their $n$ systems, with outputs $x_j^n$, and then broadcast a message $\ell_j$ that is a function only of $x_j^n$, by way of classical channels (stochastic maps) $T_j:\mathcal{X}_j^n\rightarrow\mathcal{L}_j$. The first step gives rise to a cq-state 
\begin{align}
  \label{eq:omega}
  &\omega^{X_1A_1'\ldots X_mA_m'E} \nonumber \\ 
  &\phantom{==}
   = \sum_{x_1\ldots x_m} \proj{x_1}^{X_1}\ox\cdots\ox\proj{x_m}^{X_m} \nonumber \\
   &\phantom{========}\ox (\mathcal{E}^{(1)}_{x_1}\ox\cdots\ox\mathcal{E}^{(m)}_{x_m}\ox\id_{E})\rho^{A_1\ldots A_mE},
\end{align}
where the registers $X_jA_j'$ are held by player $j$.

\subsection{Non-interactive conference key distillation protocol}
In the following, we prove a new achievability result for the distillable secret key from $\rho^{A_{1}\ldots A_{m}E}$.

\begin{theorem}
\label{secret-key cq}
With the notation above, for every $(m+1)$-partite state $\rho^{A_1\ldots A_mE}$, and the ensuing cq-state $\omega$ in Eq.~\eqref{eq:omega}, 
\begin{align*}
  K_{\text{n.i.}}(\rho) \geq S(X_{[m]}|E)_{\omega} - R_{\text{CO}}^{\text{cq}},
\end{align*}
where $R_{\text{CO}}^{\text{cq}}=\min_{R_{[m]}\in\mathcal{R}_{\text{cq}}} \sum_{j=1}^{m}R_{j}$ and $\mathcal{R}_{\text{cq}}$ 
is the set of the rate tuples $R_{[m]}=(R_{1},\ldots,R_m)$ satisfying 
\begin{align*}
\forall j\in[m]\ \forall J\subseteq[m]\setminus j\quad \sum_{i\in J}R_i\geq S(X_J|X_{[m]\setminus J},A'_j)_{\omega}.
\end{align*}
\end{theorem}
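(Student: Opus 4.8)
The plan is to realize the bound by a two-stage protocol: first drive all legitimate players to \emph{omniscience} about the classical data $X_{[m]}^n$ via a quantum communication-for-omniscience (CO) step, and then convert the now-shared randomness into a secret key by privacy amplification against the eavesdropper. After fixing the per-copy instruments $\mathcal{E}^{(j)}$ as in the statement, the relevant resource is the i.i.d.\ cq-state $\omega^{\ox n}$, in which player $j$ holds $X_j^n A_j'^n$ and the eavesdropper holds $E^n$.

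\emph{Stage 1 (CO/omniscience).} Each player $j$ bins its classical string $x_j^n$ into a message $\ell_j = T_j(x_j^n)$ of rate $R_j$ and broadcasts it; the eavesdropper records the whole transcript $L_{[m]}$. The goal is that every player $j$, using its own $X_j^n$, its quantum side information $A_j'^n$, and all the messages $\ell_{[m]}$, reconstructs the full vector $X_{[m]}^n$. This is precisely classical source coding (Slepian--Wolf binning) with quantum side information at the decoder, and the achievable binning region for a single player $j$ to recover $X_{[m]\setminus j}^n$ is governed by the constraints $\sum_{i\in J} R_i \geq S(X_J \mid X_{[m]\setminus J}, A_j')_\omega$ for all $J\subseteq[m]\setminus j$ (note $X_j\in X_{[m]\setminus J}$ since $j\notin J$). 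Intersecting these regions over all $m$ players yields exactly $\mathcal{R}_{\text{cq}}$, and choosing a rate tuple that attains the minimum gives total public communication $\tfrac1n\log|\mathcal{L}_{[m]}|\to R_{\text{CO}}^{\text{cq}}$. The essential ingredient is the existence of a \emph{simultaneous} decoder: a single random binning whose decoding measurements succeed for all $m$ players at once with error $\to 0$, as supplied by our earlier quantum-CO analysis \cite{GHZ:ISIT2020,FSAW:GHZ,WINTER:PHD} through random binning combined with a union bound over the $m$ decoding-error events and the achievability of source coding with quantum side information.

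\emph{Stage 2 (privacy amplification).} Once all players share the same reconstructed string $X_{[m]}^n$, they publicly agree on a hash function $f$ drawn from a $2$-universal family and each outputs the key $f(X_{[m]}^n)$. Reliability \eqref{reliability} is then immediate: whenever omniscience succeeds (probability $\geq 1-\varepsilon$) all players apply the same $f$ to the same argument, so $K_1=\cdots=K_m$. For secrecy \eqref{secrecy}, the eavesdropper's total side information is the cq-system $E^n L_{[m]}$ together with the public hash seed; by the quantum leftover hash lemma a hashed key of length close to $H_{\min}^{\varepsilon}(X_{[m]}^n \mid E^n L_{[m]})$ is $\varepsilon$-close to $u_{K_1}$ and in tensor product with $\tau_0^{L_{[m]}E^n}$. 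Finally, the smooth min-entropy chain rule strips off the classical transcript, $H_{\min}^{\varepsilon}(X_{[m]}^n \mid E^n L_{[m]}) \geq H_{\min}^{\varepsilon}(X_{[m]}^n \mid E^n) - \log|\mathcal{L}_{[m]}|$, and the quantum AEP gives $\tfrac1n H_{\min}^{\varepsilon}(X_{[m]}^n \mid E^n)\to S(X_{[m]}\mid E)_\omega$, so that the key rate $S(X_{[m]}\mid E)_\omega - R_{\text{CO}}^{\text{cq}}$ is achievable.

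I expect the main obstacle to be Stage~1: constructing one binning per player that lets all $m$ decoders recover $X_{[m]}^n$ simultaneously. Each player's decoder is a quantum measurement tuned to its own $A_j'^n$, and a priori these decoders could conflict; the resolution is a random-coding/union-bound argument showing that a single random binning is good for every player with high probability, which is where the quantum source-coding-with-side-information machinery enters. A secondary care point is that secrecy must hold against the \emph{combined} classical–quantum side information $E^n L_{[m]}$: keeping the transcript explicit and invoking the chain rule for smooth min-entropy is what allows the public communication cost $R_{\text{CO}}^{\text{cq}}$ to be subtracted cleanly rather than double-counted.
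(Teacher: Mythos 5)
Your proposal is correct and follows essentially the same route as the paper's own proof: Stage 1 is exactly the paper's invocation of its cq communication-for-omniscience lemma (the simultaneous-decoder result from the authors' earlier work), and Stage 2 is the paper's application of privacy amplification followed by the smooth min-entropy chain rule to strip off the transcript $L_{[m]}$ and the AEP to obtain $S(X_{[m]}|E)_\omega - R_{\text{CO}}^{\text{cq}}$. The only cosmetic difference is that you spell out the leftover-hash/union-bound mechanics that the paper absorbs into its cited lemmas.
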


A special case of this theorem \cite[Thm.~4]{FSAW:GHZ} corresponds to a scenario where each party performs a full measurement. In this case, the corresponding rate region is similar, with the $A'_j$ systems removed (see Theorem \ref{secret-key c}).

The proof of the theorem rests on two main pillars, one is multiple source coding with quantum side information, and the other privacy amplification \cite{privacy-amplification:Bennett}. In the following we will briefly review the necessary definitions and properties and then prove Theorem \ref{secret-key cq}.

Let $m$ players share $n$ copies of $\rho^{A_1\ldots A_m}$.
Each player applies its instrument $\mathcal{E}^{(j)}:A_j\rightarrow A'_j\otimes X_j$ with classical $X_j$ and quantum $A'_j$ outputs, to its share of the initial multipartite 
mixed state turning it into a cq-state
$\omega^{X_{1}A'_1\ldots X_{m}A'_m}$. They players want to attain omniscience $X_{[m]}$ at all nodes. The following theorem supplies a lower bound for this task:
\begin{lemma}[{Salek~{\&}~Winter~\cite[Thm.~4]{FSAW:GHZ}}]
\label{cq-omniscience}
An inner bound on the optimal rate region for the CO problem is the set of rate tuples $R_{[m]}=(R_1,\ldots,R_m)$ satisfying
\begin{align*}
  \forall j\in[m]\ \forall J\subseteq[m]\setminus j\quad \sum_{i\in J}R_i\geq S(X_J|X_{[m]\setminus J},A'_j)_{\omega}.
\end{align*}
\end{lemma}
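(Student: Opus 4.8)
The plan is to recognise Lemma~\ref{cq-omniscience} as a \emph{quantum Slepian--Wolf / distributed source coding} statement and to prove achievability by random binning at every source node, combined with a simultaneous decoder at every node that exploits its quantum side information. The key reformulation is that omniscience holds precisely when each node $j$ can reconstruct the tuple $X_{[m]\setminus j}^n$, since it already possesses $X_j^n$; the resources available to $j$ for this decoding are its own classical string $X_j^n$, its quantum register $A_j'^n$, and the broadcast messages $M_{[m]}=(M_1,\ldots,M_m)$.

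First I would fix the encoding: each player $i$ independently draws a uniformly random hash (bin assignment) $f_i\colon \mathcal{X}_i^n \to [2^{nR_i}]$ and broadcasts $M_i=f_i(X_i^n)$. This is manifestly non-interactive, as $M_i$ depends only on the local data $X_i^n$. It then remains to exhibit, for each node $j$, a decoding measurement on $A_j'^n$ (using $X_j^n$ and $M_{[m]}$ as classical context) that outputs an estimate $\widehat{X}_{[m]\setminus j}^n$ with vanishing error probability. The error analysis proceeds by decomposing, at each node $j$, the decoding error according to the subset $J\subseteq[m]\setminus j$ of coordinates that are actually confused --- that is, where some alternative sequence $\tilde{x}_J^n\neq x_J^n$ falls in the same bins $M_J$ while the complement $x_{[m]\setminus J}^n$ is correctly resolved and serves as classical side information. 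The number of candidate sequences competing on the coordinates $J$, conditioned on $(x_{[m]\setminus J}^n, A_j'^n)$, is controlled by the conditionally typical subspace and scales like $2^{nS(X_J\mid X_{[m]\setminus J},A_j')_\omega}$, while the random binning suppresses each collision by $2^{-n\sum_{i\in J}R_i}$. Consequently, as long as
\[
  \sum_{i\in J}R_i \geq S(X_J\mid X_{[m]\setminus J},A_j')_\omega + \delta
\]
for arbitrarily small slack $\delta>0$, the contribution of the error event indexed by $J$ vanishes as $n\to\infty$. A union bound over the finitely many subsets $J$ and the $m$ nodes $j$, followed by the standard averaging (derandomisation) argument to replace the random hashes by a single good deterministic choice, then yields achievability of every rate tuple in the stated region.

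The technical heart --- and the main obstacle --- is the decoding and error-analysis step just described: because the side information $A_j'^n$ is \emph{quantum}, one cannot simply invoke classical joint typicality and a union bound to resolve the competing sequences. One needs a \emph{single} measurement that simultaneously disambiguates all coordinates $X_{[m]\setminus j}$ from the state on $A_j'^n$, with the error contributed by each subset $J$ separately bounded by the matching conditional entropy $S(X_J\mid X_{[m]\setminus J},A_j')_\omega$. This is precisely the problem of a simultaneous decoder for classical source coding with quantum side information at the decoder \cite{WINTER:PHD}: it is handled by building the decoder from conditionally typical projectors and combining the subset error terms through the operator (Hayashi--Nagaoka) inequality, while ensuring that the number of error events and the accumulated slack remain controllable as $n\to\infty$. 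This simultaneous-decoding construction is exactly what constituted the technical core of \cite{GHZ:ISIT2020,FSAW:GHZ}, from which the lemma is quoted; specialising to trivial $A_j'$ recovers the classical Csisz\'ar--Narayan inner bound for communication for omniscience as a consistency check.
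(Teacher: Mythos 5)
Your proposal is correct and follows essentially the same route as the source of this lemma: the paper itself gives no proof (it imports the statement from \cite[Thm.~4]{FSAW:GHZ}), and the cited work proves it exactly as you describe --- non-interactive random binning at each node, with each node $j$ decoding $X_{[m]\setminus j}^n$ from $(X_j^n, A_j'^{n}, M_{[m]})$ via a simultaneous decoder built from conditionally typical projectors and the Hayashi--Nagaoka operator inequality, the error split over subsets $J$ against the conditions $\sum_{i\in J}R_i \geq S(X_J|X_{[m]\setminus J},A'_j)_\omega$. You also correctly identify the simultaneous-decoding step as the genuine technical core (the part that classical joint typicality cannot supply) rather than the binning, which matches the emphasis of \cite{GHZ:ISIT2020,FSAW:GHZ}.
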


Hashing the omniscience rate down relies on privacy amplification \cite{privacy-amplification:Bennett} and its reigning entropy, the smooth min-entropy, 
of which we will briefly review the necessary definitions and properties; cf.~\cite{Tomamichel:book} for more details. 

\begin{definition}[{Cf.~\cite[Def.~6.2]{Tomamichel:book}}]
For a (possibly subnormalized) state $\rho^{AB}$, the min-entropy of $A$ conditioned on $B$ is defined as
\begin{align*}
H_{\text{min}}(A|B)_{\rho}=\max \lambda \text{ s.t. } \rho^{AB}\leq 2^{-\lambda}\1\otimes\sigma^{B},\
\end{align*}
where $\sigma^{B}$ is a (possibly subnormalized) density operator.
\end{definition}

\begin{definition}[{Cf.~\cite[Def.~6.9]{Tomamichel:book}}]
Let $\varepsilon\in[0,1)$ and $\rho^{AB}$ be a (possibly sub-normalized) state. The smooth min-entropy of $A$ conditioned on $B$ is defined as 
\begin{align*}
H_{\text{min}}^{\varepsilon}(A|B)_\rho 
=\max H_{\text{min}}(A|B)_{\rho'} \text{ s.t. } \rho' \stackrel{\varepsilon}{\approx} \rho,
\end{align*} 
where $\rho' \stackrel{\varepsilon}{\approx} \rho$ means $P(\rho,\rho')\leq\varepsilon$ for a (possibly subnormalized) state $\rho'$.
\end{definition}

The smooth min-entropy satisfies the asymptotic equipartition property (AEP) for $0<\varepsilon<1$,
\begin{align}
\label{eq:aep}
\lim_{n\rightarrow\infty}\frac{1}{n}H_{\text{min}}^{\varepsilon}(A^n|B^n)_{\rho^{\otimes n}} = S(A|B)_\rho,
\end{align}
as well as the following chain rule for $\rho^{AYB}$ with a classical register $Y$:
\begin{align}
\label{eq:chainrule}
H_{\text{min}}^{\varepsilon}(A|YB)_\rho\geq H_{\text{min}}^{\varepsilon}(A|B)_\rho-\log|Y|.
\end{align}

Consider a source that outputs a random variable $Z$ (to be identified as a classical system $Z$)
about which there exists quantum side information $E$ for an eavesdropper, jointly described by the cq-state $\rho^{ZE} = \sum_z p(z) \proj{z}^Z \ox \rho_z^E$. 
Privacy amplification concerns the question of how much uniform randomness [$K(\varepsilon)$ bits] can be extracted from $Z$ such that it is independent of the side information $E$ \cite{privacy-amplification:RK}, up to trace distance $\varepsilon$ from this ideal.
\begin{lemma}[{Cf.~\cite[Thm.~7.9]{Tomamichel:book}}]
\label{privacy-amplification}
Let $\varepsilon\in(0,1)$. The maximum number of bits of
uniform and independent randomness extractable from $\rho^{ZE}$ is lower bounded as
\begin{align*}
\log K(\varepsilon)\geq H_{\text{min}}^{\varepsilon'}(Z|E)_\rho-2\log\frac{1}{\delta},
\end{align*}
for any $\delta\in(0,\varepsilon)$ and $\varepsilon'=\frac{\varepsilon-\delta}{2}$.
\end{lemma}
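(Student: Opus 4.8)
The plan is to prove the bound by the standard recipe for privacy amplification: two-universal hashing combined with the quantum Leftover Hash Lemma, smoothed so as to replace the min-entropy by its $\varepsilon'$-smooth version. Concretely, I would fix a two-universal family $\{f_s:\mathcal{Z}\to[K]\}_s$ of hash functions with output alphabet of size $K$, and let $S$ be the uniformly distributed, publicly announced seed (which is handed to the eavesdropper together with $E$). The extractor outputs $f_S(Z)$, producing the cq-state $\rho^{f_S(Z)SE}$, and the task is to show this is $\varepsilon$-close in trace distance to the ideal $u_K\ox\rho^{SE}$ once $\log K=H_{\text{min}}^{\varepsilon'}(Z|E)_\rho-2\log\frac{1}{\delta}$, where $u_K$ denotes the uniform state on the key register.

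The technical heart is the (non-smooth) quantum Leftover Hash Lemma, which I would state as follows: for any cq-state $\sigma^{ZE}$ and two-universal hashing to $[K]$,
\[
  \frac{1}{2}\left\|\sigma^{f_S(Z)SE}-u_K\ox\sigma^{SE}\right\|_1
  \leq \frac{1}{2}\,2^{-\frac{1}{2}\left(H_{\text{min}}(Z|E)_\sigma-\log K\right)}.
\]
I would prove this by first bounding the trace distance by the Hilbert--Schmidt (2-)norm distance through Cauchy--Schwarz, then averaging over the seed $S$; two-universality of $\{f_s\}$ controls the resulting expected collision term by a quantity governed by the conditional collision entropy $H_2(Z|E)_\sigma$, and the ordering $H_{\text{min}}\leq H_2$ loosens this to the stated min-entropy form. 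This averaging-plus-collision-bound step is where I expect the main effort to lie.

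To obtain the smooth statement I would pick $\tilde\rho^{ZE}$ with $P(\rho,\tilde\rho)\leq\varepsilon'$ attaining $H_{\text{min}}(Z|E)_{\tilde\rho}=H_{\text{min}}^{\varepsilon'}(Z|E)_\rho$, and apply the Leftover Hash Lemma to $\tilde\rho$ with the chosen output length $\log K$, so that its right-hand side equals exactly $\frac{1}{2}\delta$. Finally, since hashing-and-announcing is a quantum channel and the trace distance is non-increasing under it, and since $P(\rho,\tilde\rho)\leq\varepsilon'$ implies $\frac{1}{2}\|\rho-\tilde\rho\|_1\leq\varepsilon'$ (hence the same bound between the two corresponding ideal states), two applications of the triangle inequality give
\[
  \frac{1}{2}\left\|\rho^{f_S(Z)SE}-u_K\ox\rho^{SE}\right\|_1
  \leq 2\varepsilon'+\tfrac{1}{2}\delta
  = \varepsilon-\tfrac{1}{2}\delta \leq \varepsilon,
\]
using $\varepsilon'=\frac{\varepsilon-\delta}{2}$. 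This certifies that $\log K=H_{\text{min}}^{\varepsilon'}(Z|E)_\rho-2\log\frac{1}{\delta}$ uniform bits, independent of $(S,E)$, can be extracted within $\varepsilon$ of ideal, which is exactly the claimed lower bound on $\log K(\varepsilon)$.
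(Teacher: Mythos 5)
Your proposal is correct and follows exactly the standard route---two-universal hashing, the quantum Leftover Hash Lemma proved via a collision-entropy bound, and smoothing combined with the triangle inequality---which is precisely the proof of the result the paper imports by citation (Tomamichel, Thm.~7.9) rather than reproving. The only technicality you leave implicit (standard and easily fixed) is that the smoothing state $\tilde\rho$ may be assumed classical on $Z$, since dephasing $Z$ neither increases the purified distance to the cq-state $\rho$ nor decreases the conditional min-entropy, so the Leftover Hash Lemma indeed applies to it.
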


\begin{proof}[Proof of Theorem \ref{secret-key cq}]
The idea is that each player applies its instrument independently to each copy of its share of the initial multipartite 
mixed state turning it into a cq-state
\begin{align*}
 \omega^{X_{1}A'_1\ldots X_{m}A'_mE} 
  &=(\mathcal{E}^{(1)}\otimes\cdots\otimes\mathcal{E}^{(m)})\rho\\
  &=\sum_{x_1,\ldots,x_m} \ketbra{x_{[m]}}^{X_{[m]}} \\
  &\phantom{====} \otimes (\mathcal{E}_{x_1}^{(1)}\otimes\cdots\otimes\mathcal{E}_{x_m}^{(m)})\rho^{A_1\ldots A_mE}.
\end{align*}
From $n$ copies of the initial mixed state, the protocol reduces to key extraction 
from $n$ copies of the cq-state $\omega$, where each player broadcasts
a {deterministic} function of its local data to other players over noiseless broadcast channel. 
After players broadcast $\ell_{[m]}$, the state transforms into
\begin{align*}
  \widehat{\omega}^{L_{[m]}^{m+1}A^{\prime n}_{[m]}E^n}
  &= \sum_{x_{[m]}^n,\ell_{[m]}}
\ketbra{\ell_{[m]}}^{\otimes (m+1)} \\
  &\phantom{=====}
   \otimes \ketbra{x_{[m]}}^{X_{[m]}} \otimes \widehat{\omega}^{A^{\prime n}_{[m]}E^n}_{(x_{[m]}^n,\ell_{[m]})},
\end{align*}
where the first $(m+1)$ registers, one belonging to each $m$ players and one to eavesdropper, indicate that everyone including eavesdropper know the 
broadcast information. Note that here, the
$\widehat{\omega}^{A^{\prime n}_{[m]}E^n}_{(x_{[m]}^n,\ell_{[m]})}$ are not normalized, rather the sum of their traces is $1$.
All players measure their version of the key $K_j$ ($j\in[m]$); the resulting state of the key for each player, say player $1$, and eavesdropper becomes
\begin{align*}
\Omega^{K_1L_{[m]}E}
 &= \sum_{x_{[m]}^n,\ell_{[m]},k_1}\ketbra{k_1}\otimes\ketbra{\ell_{[m]}}
\otimes\widetilde{\omega}^{E^n}_{(x_{[m]}^n,\ell_{[m]})},
\end{align*}
where 
$\widetilde{\omega}^{E^n}_{(x_{[m]}^n,\ell_{[m]})} = \Tr_{{A'{}}^n}D_{k_1}^{\ell_{[m]}}\widehat{\omega}^{A^{\prime n}_{[m]}E^n}_{(x_{[m]}^n,\ell_{[m]})}$ is the non-normalized post-measurement state.
We call this an $(n,\varepsilon)$-protocol if 
\begin{align}
  \label{reliability-1}
  \Pr\{K_1=\ldots =K_m\} \geq 1-\varepsilon,\\
\label{secrecy-2}
 \frac{1}{2}\left\|\Omega^{K_1L_{[m]}E}-u_{K_1}\otimes\tau_{0}^{L_{[m]}E^n}\right\|_{1}\leq \varepsilon,
\end{align}
where $u_{K_1}=\frac{1}{|\mathcal{K}|}\sum_{k_1}\ketbra{k_1}$ and $\tau_{0}$ is some constant state.

Following the protocol of Lemma \ref{cq-omniscience}, for block length $n$ the players broadcast a total of $nR_{\text{CO}}^{cq}+o(n)$ 
bits of information $\sum_{j=1}^{m}\ell_{j}$
to reach omniscience, i.e. share the state
\begin{align*}
\omega'
=\sum_{x_1^n,\ldots,x_m^n}
p(x_{[m]}^n)\ketbra{x_{[m]}^n}^{X_{[m]}^n}
\otimes \omega_{x_{[m]}^n}^{\prime L_{[m]}E^n},
\end{align*}
where all players traced out their residual quantum systems after reaching omniscience and systems 
on $L_{[m]}$ and $E^n$ denote the eavesdropper's classical and quantum side information, respectively.
In conformity with privacy amplification Lemma \ref{privacy-amplification}, the $m$ legal players can extract 
$H_{\text{min}}^{\varepsilon}(X_{[m]}^n|L_{[m]}E^n)_{\omega'}$ bits of uniform and independent
randomness. Applying the chain rule for the smooth min-entropy \eqref{eq:chainrule} and the asymptotic equipartition property \eqref{eq:aep}, the extracted key has length
\begin{align*}
 nR &\geq H_{\min}^\varepsilon(X_{[m]}^n|L_{[m]},E^n)_{\widehat\omega} \\
    &\geq H_{\min}^\varepsilon(X_{[m]}^n|E^n)_{\omega^{\ox n}} - \sum_i \log|\mathcal{L}_i| \\
    &\geq n S(X_{[m]}|E)_\omega - nR_{\text{CO}}^{cq} - o(n).
\end{align*}
This concludes the proof.
\end{proof}

A special case of this protocol is when each players applies a POVM (instead of an instrument) to its share of the initial multipartite 
mixed state turning it into a cq-state
\begin{align}
\label{cq-state}
\omega' 
 &=\!\!\! \sum_{x_1,\ldots,x_m} \!\!
\Tr\rho(M_{x_{1}}^{1}\otimes\cdots\otimes M_{x_{m}}^{m}) \ketbra{x_{[m]}}^{X_{[m]}}\otimes \rho_{x_{[m]}}^{E},
\end{align}
where $p(x_{[m]}) = \Tr\rho(M_{x_{1}}^{1}\otimes\cdots\otimes M_{x_{m}}^{m})$ is the joint distribution of $m$ random variables $\{X_i\}_{i=1}^m$ 
recording the measurement outcomes on $\rho^{A_1\ldots A_m E}$.
%
The following is a special case of Theorem \ref{secret-key cq}, in which the $m$ players apply full measurement instead of instruments.
\begin{theorem}\label{secret-key c}
With the notation above, for every $(m+1)$-partite state $\rho^{A_1\ldots A_mE}$,
\begin{align*}
K(\rho)_{\text{n.i.}}\geq S(X_{[m]}|E)_{\omega'} - R_{CO}^{c},
\end{align*}
where $R_{\text{CO}}^{\text{c}}=\min_{R_{[m]}\in\mathcal{R}_{\text{c}}} \sum_{i=1}^{m}R_{i}$ and $\mathcal{R}_{\text{c}}$ 
is the set of the rate tuples $R_{[m]}=(R_{1},\ldots,R_m)$ satisfying 
\begin{align*}
\label{omni:conditions}
\forall I\subsetneq[m]\quad \sum_{j\in I}R_j\geq H(X_I|X_{[m]\setminus I})_{\omega'}.
\end{align*}
\end{theorem}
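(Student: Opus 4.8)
The plan is to obtain Theorem~\ref{secret-key c} directly as a specialisation of Theorem~\ref{secret-key cq}, by choosing each player's instrument $\mathcal{E}^{(j)}$ to be a \emph{full measurement}, i.e.\ a POVM $(M^{j}_{x_j})_{x_j}$ whose post-measurement quantum register $A'_j$ is trivial (one-dimensional). Under this choice the cq-state $\omega$ of Eq.~\eqref{eq:omega} collapses to the purely classical--quantum state $\omega'$ of Eq.~\eqref{cq-state}, in which the legitimate players retain only their classical outcomes $X_{[m]}$ and the eavesdropper holds $E$. The entire argument is then a matter of substituting this choice into the bound of Theorem~\ref{secret-key cq} and identifying the resulting rate region with $\mathcal{R}_{\text{c}}$.

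First I would note that, since the $X_j$ registers are classical and the marginals of $\omega$ and $\omega'$ on $X_{[m]}E$ coincide, the leading term is unchanged: $S(X_{[m]}|E)_\omega = S(X_{[m]}|E)_{\omega'}$. It remains to show that the communication-for-omniscience cost $R_{\text{CO}}^{\text{cq}}$ reduces to $R_{\text{CO}}^{\text{c}}$, equivalently that $\mathcal{R}_{\text{cq}}=\mathcal{R}_{\text{c}}$ once every $A'_j$ is trivial. With $A'_j$ one-dimensional the conditioning register $A'_j$ drops out and the von Neumann entropies of the classical outcomes become Shannon entropies, so for every $j$ and every $J\subseteq[m]\setminus j$ one has $S(X_J|X_{[m]\setminus J},A'_j)_\omega = H(X_J|X_{[m]\setminus J})_{\omega'}$, which no longer depends on the spectator index $j$. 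Hence the defining inequalities of $\mathcal{R}_{\text{cq}}$ become $\sum_{i\in J}R_i\geq H(X_J|X_{[m]\setminus J})_{\omega'}$, ranging over all pairs $(j,J)$ with $J\subseteq[m]\setminus j$.

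The key combinatorial observation closing the argument is that a subset $J$ arises in this family exactly when it omits at least one index $j$, i.e.\ exactly when $J\subsetneq[m]$; conversely every proper subset misses some $j$ and is therefore admissible, while $J=\emptyset$ yields only the vacuous constraint $0\geq 0$ and may be discarded. Thus the constraint family of $\mathcal{R}_{\text{cq}}$ coincides with $\{\sum_{j\in I}R_j\geq H(X_I|X_{[m]\setminus I})_{\omega'}:I\subsetneq[m]\}$, which is precisely the definition of $\mathcal{R}_{\text{c}}$; hence $R_{\text{CO}}^{\text{cq}}=R_{\text{CO}}^{\text{c}}$, and combining this with the unchanged leading term gives the claim. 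I do not expect a genuine obstacle, as the statement is a corollary by construction; the only point needing care is the index bookkeeping in this last step, namely verifying that quantifying over $(j,J)$ with $J\subseteq[m]\setminus j$ produces exactly the proper subsets $I\subsetneq[m]$, with repeated inequalities being harmless and no proper subset omitted.
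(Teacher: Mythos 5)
Your proposal is correct and matches the paper's own treatment: the paper states Theorem~\ref{secret-key c} precisely as the special case of Theorem~\ref{secret-key cq} in which each instrument is a full measurement with trivial $A'_j$, so that $\omega$ reduces to $\omega'$ and the region $\mathcal{R}_{\text{cq}}$ collapses to $\mathcal{R}_{\text{c}}$. Your bookkeeping — the equality $S(X_J|X_{[m]\setminus J},A'_j)_\omega = H(X_J|X_{[m]\setminus J})_{\omega'}$ and the identification of the pairs $(j,J)$ with $J\subseteq[m]\setminus j$ with the proper subsets $I\subsetneq[m]$ — is exactly the (implicit) content of the paper's reduction, just spelled out in more detail.
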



\subsection{Non-optimality of omniscience protocols}
\label{subsec:n-i-limits}
One might wonder about the optimality of the key rate in Theorem~\ref{secret-key cq} (a question that presented itself already in our predecessor work \cite{GHZ:ISIT2020,FSAW:GHZ}). Evidently, one should optimise over local instruments $\mathcal{E}^{(i)}$, and presumably also allow regularisation (working directly with $n$ copies $\rho^{\ox n}$). 
Seeing that we consider only non-interactive LOCC protocols, it is natural to restrict the supposed converse to non-interactive protocols (that this is a serious restriction can be seen from specifically constructed examples, cf. \cite{Maurer:key} and \cite{Wilmink2003}). Looking at the classical case is then encouraging, as Csisz\'ar and Narayan have shown that the maximum common randomness rate distilled is indeed produced by communication for omniscience \cite{Csiszar-Narayan.2004}. One way to see this is to realize first that any protocol creating common randomness can be supplemented by additional communication to achieve omniscience of the original data vector $X_{[m]}$, while not decreasing the rate of common randomness created. 

However, in the quantum case we are going to argue that even among non-interactive protocols, the rate of Theorem~\ref{secret-key cq}, based on omniscience of the classical information generated before the first communication, is not optimal. Observe that in the classical setting, omniscience is uniquely defined because classical information, and only classical information, is there from the start. On the other hand, LOCC protocols generate correlated randomness as they go along. 

For this purpose, consider states of the form 
\begin{equation}\begin{split}
  \label{eq:against-CO}
  \rho^{ABCE} 
    &= \frac{1}{dk^3}\sum_{x=1}^d\sum_{\alpha,\beta,\gamma=1}^k 
\left(U_\alpha\proj{x}U_\alpha^\dagger\ox\proj{\beta}\right)^{A}\\
    &\phantom{===========}
    \ox \left(V_\beta\proj{x}V_\beta^\dagger\ox\proj{\gamma}\right)^{B}\\      
    &\phantom{==}
     \ox \left(W_\gamma\proj{x}W_\gamma^\dagger\ox\proj{\alpha}\right)^{C}\ox \proj{\alpha\beta\gamma}^E, 
\end{split}\end{equation}
where $U_\alpha$, $V_\beta$ and $W_\gamma$ are unitaries. 
Evidently, if each player measures their $\beta$,
$\gamma$ and $\alpha$, respectively, and broadcasts it, then each can
undo the local unitary $U_\alpha$, $V_\beta$ and $W_\gamma$, respectively,
and end up sharing the perfect secret key $x$, amounting to a rate 
$\log d$. This is also optimal, since conditional on Eve's knowledge 
of $\alpha\beta\gamma$, the local entropies are $\log d$, putting 
an upper bound on any distillable secret correlation. 
On the other hand, any protocol of omniscience as we consider would 
w.l.o.g. measure and broadcast $\beta$, $\gamma$ and $\alpha$, respectively,
as it is information Eve has anyway, and in this way all players share
it. However, at least one of the players would have to measure 
a significant part of the encrypted $x$-information to generate 
the local randomness that eventually goes into the omniscience 
information. For concreteness, let $k=2$ and $U_1=V_1=W_1=\1$, 
$U_2=V_2=W_2=\text{QFT}_d$, the quantum Fourier transform which maps the computational 
basis $\{\ket{x}\}$ to an unbiased basis. By the Maassen-Uffink entropic uncertainty relation \cite{MaassenUffink}, any measurement of any local system, producing a random variable $Y$, is constrained by $I(Y;X)\leq \frac12\log d$, 
and this is then an upper bound on the common randomness that player can distill with the other two taken together (because it is an upper bound on the Holevo information between $Y$ and the other two players). More generally, $I(Y:X^n) \leq \frac12 n\log d$ for $n$ independent copies of $\rho$ and an arbitrary measurement on any one party with outcome $Y$. 
Overall, our measure-and-communicate-for-omniscience (MCO) protocols cannot get above rate $\frac12\log d$ -- which incidentally is also 
achievable.

\section{GHZ distillation from mixed states}
We now move on to the distillation of entanglement in the form of GHZ states
\(
  \ket{\Gamma_m} = \frac{1}{\sqrt{2}}\left(\ket{0}^{\ox m} + \ket{1}^{\ox m}\right)
\)
from an $m$-partite mixed state $\rho^{A_1\ldots A_m}$. We present two distinct approaches for tackling this problem. 
The first is to use the multipartite resource state to produce bipartite entanglement in the form of EPR pairs between designated pairs of players, assisted by the others using a general LOCC procedure, and then to use the network of EPR pairs to generate GHZ states. We show two methods of doing this, one using quantum state merging in a generalisation of state combing \cite{PhysRevLett.103.220501}, the second using assisted entanglement distillation. 
The second approach rests on making coherent approach the secret key agreement protocol of Theorem \ref{secret-key cq}, applied to a purification of the mixed state $\rho^{A_1\ldots A_m}$.

We start by recalling the most general LOCC protocol for GHZ distillation, which is any $m$-partite channel $\Lambda$ acting on $A_1^n\ldots A_m^n$ that can be implemented by local operations and classical communications. It is called an $(n,\varepsilon)$-protocol for GHZ distillation
with rate $k/n$ if it acts on $n$ copies of the state $\rho^{A_1\ldots A_m}$ and produces $k$ copies of GHZ state $\ket{\Gamma_m}$ up to fidelity $1-\varepsilon$:
\begin{align*}
F\left( \ketbra{\Gamma_m}^{\otimes k}, \Lambda(\rho^{\ox n})\right) \geq 1-\varepsilon.
\end{align*}
A number $R$ is an achievable rate if for every $n$ there exist $(n,\varepsilon)$-protocols, with $\varepsilon\rightarrow0$ and $k/n\rightarrow R$ 
as $n\rightarrow\infty$. 
Then the GHZ distillation capacity of $\rho$ is defined as
\begin{align}
D(\rho)\coloneqq\sup\{R:R\ \text{achievable}\}.
\end{align}

The most general non-interactive protocol for GHZ distillation instead looks like this:

\begin{definition}
A non-interactive LOCC protocol consists of the following:
\begin{enumerate}
\item An instrument $\mathcal{E}^{j} = (\mathcal{E}^{j}_{\ell_j})_{\ell_j\in\mathcal{L}_j}$, for each player $j\in [m]$;
\item A quantum operation $\mathcal{G}^{(j)}_{\ell_{[m]}}$, 
for each player $j\in [m]$ and every public message tuple $\ell_{[m]} \in \mathcal{L}_{[m]}$.
\end{enumerate}
It is called an $(n,\varepsilon)$-protocol for GHZ distillation
with rate $k/n$ if it acts on $n$ copies of the state $\rho^{A_1\ldots A_m}$
and produces $k$ copies of the GHZ state $\ket{\Gamma_m}$ up to fidelity $1-\varepsilon$:
\begin{align}
F\left(
\ketbra{\Gamma_m}^{\otimes k},
\sigma^{B_1^k\ldots B_m^k}\right)
\geq 1-\varepsilon,
\end{align}
where 
\begin{align*}
\sigma^{B_1^k\ldots B_m^k} 
&= \sum_{\ell_{[m]}}
\left(\mathcal{G}^{1}_{\ell_{[m]}}\otimes\cdots\otimes\mathcal{G}^{m}_{\ell_{[m]}}\right)
\left(\mathcal{E}^{1}_{\ell_{1}}\otimes\cdots\otimes\mathcal{E}^{m}_{\ell_{m}}\right) \rho^{\otimes n}.
\end{align*}
A number $R$ is an achievable rate if for every $n$ there exist 
$(n,\varepsilon)$-protocols, with $\varepsilon\rightarrow0$ and $k/n\rightarrow R$ 
as $n\rightarrow\infty$. 
The non-interactive GHZ distillation capacity of $\rho$ is defined as
\begin{align}
  D_{\text{n.i.}}(\rho) \coloneqq \sup\{R:R \ \text{achievable} \}.
\end{align}
\end{definition}

\subsection{GHZ distillation via EPR state generation}
\label{merging} 

We start with describing two ``baseline'' protocols for the distillation of GHZ states, both of which proceed through first creating EPR pairs between certain designated pairs of players, and finally using teleportation to fuse them into GHZ states. 

\begin{theorem}
\label{thm:GHZ-combing}
    Let $\rho^{A_1\ldots A_m}$ be held by $m$ parties. The following rate of GHZ is distillable under LOCC: 
    \begin{align*}
       D_{\exists} = \max_{i\in[m]}\left\{ \min_{\emptyset\neq J\subseteq [m]\setminus\{i\}}{\frac{I(A_J\rangle A_{[m]\setminus J})_\rho}{\abs{J}}} \right\}.
    \end{align*}
\end{theorem}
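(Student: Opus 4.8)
The plan is to realise the two-stage ``baseline'' scheme announced just before the theorem: first turn $\rho^{A_1\ldots A_m}$ into a \emph{star} of bipartite EPR pairs centred at a single distinguished party $i$, and then fuse this star into GHZ states by teleportation. I fix the party $i\in[m]$ achieving the outer maximum, and aim to distil, from $n$ copies of $\rho$, EPR pairs between $i$ and every other party $j\in[m]\setminus\{i\}$ at a \emph{common} rate $R$. Holding $\approx Rn$ EPR pairs between $i$ and each $j$ simultaneously is exactly the resource the fusion step consumes, so the achievable GHZ rate equals the achievable common EPR rate of the star.

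For the fusion, party $i$ locally prepares $k\approx Rn$ copies of $\ket{\Gamma_m}$ entirely in its own laboratory (holding all $m$ legs of each GHZ state), and then, for each $j\neq i$, teleports the $j$-th leg of every copy to party $j$ using one shared EPR pair per leg. Since teleportation of one qubit consumes one EPR pair together with free classical communication and leaves the global state invariant, the $m$ parties end up sharing $k$ copies of $\ket{\Gamma_m}$ up to the accumulated error. This is the routine part and yields GHZ rate $R$ whenever the EPR network exists at rate $R$.

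The heart of the argument is the EPR-network stage, which I would build from quantum state merging as a generalisation of entanglement combing \cite{PhysRevLett.103.220501}. Passing to a purification $\ket{\psi}^{A_1\ldots A_m R}$ with $R$ an inaccessible reference, the parties $j\neq i$ merge their shares towards $i$ according to a chosen schedule, each round being a one-way state-merging step that deposits EPR pairs with $i$; since only the genuine parties act, what governs the rate across a cut $J\mid[m]\setminus J$ is the one-way (hashing) coherent information $I(A_J\rangle A_{[m]\setminus J})_\rho=S(A_{[m]\setminus J})_\rho-S(A_{[m]})_\rho$ to the part held by everyone else. Ranging over schedules and time-sharing, the achievable rate tuples $(E_j)_{j\neq i}$ should fill the (intersection with the positive orthant of the) polymatroid
\begin{align*}
  \Bigl\{\,E\geq 0 : \forall\, \emptyset\neq J\subseteq[m]\setminus\{i\},\ \ \sum_{j\in J}E_j\leq I(A_J\rangle A_{[m]\setminus J})_\rho \,\Bigr\},
\end{align*}
where submodularity of $J\mapsto I(A_J\rangle A_{[m]\setminus J})_\rho$ follows from strong subadditivity. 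Every leg of the star must carry rate $R$, so the largest admissible symmetric point is $E_j\equiv R$ with $R=\min_{\emptyset\neq J\subseteq[m]\setminus\{i\}} I(A_J\rangle A_{[m]\setminus J})_\rho/\abs{J}$; optimising over the choice of $i$ then produces $D_\exists$.

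I expect the main obstacle to be precisely this simultaneous saturation of \emph{all} cuts by genuine-party operations. One must show that the (assisted) combing schedule can reach the coherent-information rate across every cut at once, rather than merely across one fixed cut, or merging all other systems onto $i$ alone --- the latter only attains the weaker rank $I(A_J\rangle A_i)_\rho$, so the complement $A_{[m]\setminus J}$ must genuinely be brought in as assistance. The conceptual crux is the bookkeeping ensuring that the inaccessible reference $R$ never participates, so that the coherent information to the complement (and not a marginal entropy computed with $R$) is what appears; the remaining technical work is controlling the accumulation of smoothing errors over the $m-1$ merging rounds and the subsequent teleportations.
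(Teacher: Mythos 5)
Your proposal is correct and follows essentially the same route as the paper's proof: purify $\rho$, run entanglement combing (iterated state merging, per \cite{PhysRevLett.103.220501}) towards the root $i$ with the purifying system treated as a virtual last-to-merge party so that the cut bounds are the coherent informations $I(A_J\rangle A_{[m]\setminus J})_\rho$, time-share over merging orders to fill the rate region, equalize all star rates, fuse by teleportation, and maximize over $i$. The ``main obstacle'' you flag --- saturating all cuts simultaneously --- is resolved exactly as you suggest, by invoking the full combing rate region (convex hull of the $(m-1)!$ ordering-induced extreme points), so no new idea is missing.
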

\begin{proof}
The proof is founded on the entanglement combing protocol, where an initial pure entangled state is transformed into EPR pairs between a distinguished party $i$ (the ``root'') and the other parties $[m]\setminus\{i\}$ (the ``leaves'') \cite{PhysRevLett.103.220501}. The protocol is actually repeated state merging \cite{state-merging}, from the ``leaves'' to the ``root'', which is why we can apply it to a mixed state. 
In \cite{PhysRevLett.103.220501}, for each root node the complete rate region of the $m-1$ EPR rates between $j\in[m]\setminus\{i\}$ and $i$ is given. It is described as the convex hull of its extreme points, each of which is given by one of the different orders in which the leaf nodes are merged to the root; all the other points of the rate region are achieved by time sharing, i.e.~convex combination. We note in passing that using the new bounds of \cite[Sec.~6.3~{\&}~Cor.~6.12]{simultaneous-decoupling}, one can also understand it as all merging steps being done simultaneously, which allows one to attain the points of the rate region directly without going through time sharing of the extreme points.

Our present state is mixed, so we consider a purification $\psi^{A_1\ldots A_mE}$ of $\rho^{A_1\ldots A_m}$ and run a virtual combing protocol on the pure state, where of course $E$ is not actually participating, which is why that party has to go last in the iterative protocol of \cite{PhysRevLett.103.220501}.
That means that the $(m-1)!$ extreme points of the rate polytope in $\mathbbm{R}^{m-1}$ correspond to $m-1$ players merging their states in different orders to the root $i$, and in each case the environment $E$ merges its state at the end (this is because that last step is only performed virtually, as the environment does not actually contribute to the protocol). Using time sharing on the $(m-1)!$  different orders of merging $m-1$ parties to the $i$-th party, the region of attainable tuples of rates $D^{\text{EPR}}(i:j)$ of EPR states between player $i$ and $j\neq i$, results in 
\begin{align*}
    \forall J\subseteq [m]\setminus\{i\} \quad \sum_{j\in J} D^{\text{EPR}}(i:j) \leq I(A_{J}\rangle A_{[m]\setminus J})_\rho.
\end{align*}
To then create GHZ states from the combed entanglement we use simple teleportation from the root $i$ to all leaves $j\in[m]\setminus\{i\}$, and for this to work all EPR rates have to be equal, i.e.~$D^{\text{EPR}}(i:1) = \ldots = D^{\text{EPR}}(i:i-1) = D^{\text{EPR}}(i:i+1) = \ldots = D^{\text{EPR}}(i:m) =: D$,
so that our GHZ rate is achievable in this protocol if and only if
\begin{align*}
   \forall J\subseteq [m]\setminus\{i\} \quad \abs{J}D \leq I(A_{J}\rangle A_{[m]\setminus J})_\rho,
\end{align*}
which is maximized by $D = \min_{\emptyset\neq J\subseteq [m]\setminus\{i\}} \frac{I(A_{J}\rangle A_{[m]\setminus J})_\rho}{\abs{J}}$. Finally we optimize over the choice of the distinguished party, 
which concludes the proof.
\end{proof}

Next, we develop another protocol and associated rate for GHZ distillation that is based on the assisted distillable entanglement for mixed states. For two parties, $i$ and $j$, this is the largest rate of EPR pairs distillable by LOCC from $\rho^{\ox n}$, and denoted $E_A(i:j|\rho)$. After distilling EPR pairs between all adjacent nodes of a spannign tree of the $m$ parties we can fuse them together by teleportation to obtain GHZ states.
We start by recalling a lower bound on this quantity due to Dutil and Hayden \cite{Dutil-Hayden}, see also \cite[Cor.~6.13]{simultaneous-decoupling}.

\begin{definition}
    For a multipartite state $\rho^{A_1\ldots A_m}$, the min-cut coherent information between parties $A_i$ and $A_j$ is defined as follows:
    \begin{align*}
       I_{\text{min-cut}}\left(A_i\rangle A_j\right)_\rho 
        \coloneqq \min_{J\subseteq [m]\setminus \{i,j\}} I_{\text{c}}\left(A_iA_{J}\rangle A_{[m]\setminus(J\cup\{i\})}\right)_\rho.
    \end{align*}
\end{definition}

\begin{lemma}[{Cf.~Dutil~{\&}~Hayden~\cite[Thm.~14]{Dutil-Hayden}}]
Let $\rho^{A_1\ldots A_m}$ be an $m$-partite state. The asymptotic entanglement of assistance between parties $A_i$ and $A_j$ is lower bounded by
\begin{equation}\begin{split}
  &E_A(i:j|\rho) 
  \geq \\
  &\sup_{\mathcal{T}_j:A_j\rightarrow B_j \atop \text{cptp maps}} 
       \max\{ 
        I_{\text{min-cut}}(B_i\rangle B_j)_\sigma,
        I_{\text{min-cut}}(B_j\rangle B_i)_\sigma
        \}\\
  &\phantom{=======}
        \text{s.t. } 
        \sigma^{B_1\ldots B_m} = (\mathcal{T}_1\ox\cdots\ox\mathcal{T}_m)\rho^{A_1\ldots A_m}.
\end{split}\end{equation} 
\end{lemma}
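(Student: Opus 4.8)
\begin{proof-of}[sketch of Lemma (Dutil--Hayden lower bound)]
The plan is to exhibit an explicit LOCC protocol and show it distils EPR pairs between $A_i$ and $A_j$ at the claimed rate. First I would dispose of the supremum over preprocessing: since each $\mathcal{T}_k$ is a local channel and local operations form a (free) subclass of LOCC, any rate distillable from $\sigma^{B_1\ldots B_m}=(\mathcal{T}_1\ox\cdots\ox\mathcal{T}_m)\rho$ is also distillable from $\rho$. Hence it suffices to prove $E_A(i:j|\sigma)\geq\max\{I_{\text{min-cut}}(B_i\rangle B_j)_\sigma,\,I_{\text{min-cut}}(B_j\rangle B_i)_\sigma\}$ for a fixed tuple of local maps $(\mathcal{T}_1,\ldots,\mathcal{T}_m)$, and then take the supremum. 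The two terms are interchanged by swapping the roles of $i$ and $j$, so the protocol only needs to achieve one of them, say $I_{\text{min-cut}}(B_i\rangle B_j)_\sigma$, and we keep whichever of the two orientations performs better.

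Second, the core is multiparty quantum state merging \cite{state-merging}. Fix a purification $\ket{\sigma}^{B_1\ldots B_m E}$, with the environment $E$ held by no one. In the protocol the assisting parties help transfer party $i$'s share toward party $j$: by the state merging theorem, transferring a block of systems $C$ to a receiver $D$ under LOCC has an entanglement cost of $S(C|D)$ ebits, so when this conditional entropy is negative the same step instead yields $-S(C|D)=I_{\text{c}}(C\rangle D)$ EPR pairs shared between the sender and $D$. Orchestrating the merges so that party $j$ ends up holding a purification of $i$'s original share leaves maximally entangled pairs shared between $A_i$ and $A_j$, at an asymptotic rate set by the coherent informations across the cuts the quantum information must cross. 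Since the environment $E$ is inert, it can only ever be merged virtually and therefore must be ordered last --- exactly as in the combing argument behind Theorem \ref{thm:GHZ-combing} --- which keeps every coherent information in the bound a functional of the participating systems $B_{[m]}$ alone.

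Third, I would pin down why the minimum over cuts appears. As the assisting parties are absorbed into the transfer, each intermediate stage leaves party $A_i$ together with some subset $J$ of the helpers on the ``sender'' side $B_iB_J$, and party $A_j$ together with the remaining helpers $B_{[m]\setminus(J\cup\{i\})}$ on the ``receiver'' side; for that stage to go through reliably the associated coherent information $I_{\text{c}}(B_iB_J\rangle B_{[m]\setminus(J\cup\{i\})})_\sigma$ must not be exceeded, and the binding constraint over the whole protocol is the smallest of these, namely $I_{\text{min-cut}}(B_i\rangle B_j)_\sigma$. Rather than crossing the cuts sequentially and recovering intermediate points by time sharing, I would instead invoke the simultaneous decoupling result \cite[Cor.~6.13]{simultaneous-decoupling} to decouple all helper systems and $E$ from the $A_iA_j$ pair in a single step, attaining the min-cut rate directly. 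Teleportation then fuses the resulting pairs as required; taking the better of the two merging orientations accounts for the maximum, and the supremum over the preprocessing maps yields the full bound.

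The main obstacle is the final decoupling estimate: one must show that the simultaneous, assisted merge genuinely leaves an $A_iA_j$ state that is $\varepsilon$-close to the maximally entangled state at a rate matching the bottleneck, with the error vanishing as $n\to\infty$. This rests on the one-shot / smooth-entropy decoupling bounds for state merging together with a careful argument that the inert environment $E$ enters only through the reference and never forces the achievable rate below $I_{\text{min-cut}}$. A secondary subtlety is the bookkeeping of the two orientations: one has to verify that it is the coherent-information cuts of the post-processed state $\sigma^{B_1\ldots B_m}$, and not of the purified state, that govern achievability, so that the bound is expressible purely through $I_{\text{min-cut}}(B_i\rangle B_j)_\sigma$ and $I_{\text{min-cut}}(B_j\rangle B_i)_\sigma$.
\end{proof-of}
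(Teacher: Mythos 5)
First, a point of orientation: the paper does not actually prove this lemma --- it is imported by citation from Dutil and Hayden \cite[Thm.~14]{Dutil-Hayden}, with a pointer to \cite[Cor.~6.13]{simultaneous-decoupling} as an alternative modern derivation --- so your sketch can only be measured against those cited proofs. In spirit you are in the same family of techniques: purify $\sigma$, treat the environment $E$ as an inert party that can only be ``merged last,'' and note that the local preprocessing $\mathcal{T}_1\otimes\cdots\otimes\mathcal{T}_m$ is free under LOCC. That last reduction, and the handling of the two orientations by symmetry, are correct and unproblematic.

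The genuine gap is in your third paragraph, and it sits exactly where all the work of the cited proofs lies. Your ``bottleneck'' argument --- each intermediate stage must not exceed $I_{\text{c}}(B_iB_J\rangle B_{[m]\setminus(J\cup\{i\})})_\sigma$, hence the smallest cut binds --- is converse-type reasoning: it explains why one cannot do better than the min-cut, not why the min-cut is \emph{achievable}. The sequential merge-toward-$j$ protocol you describe only runs to completion when every intermediate merging step has negative conditional entropy, i.e.\ is free under LOCC and even generates ebits; when some helper's step is not free, that helper cannot pay an ebit cost it does not have, the protocol halts, and nothing in your sketch shows that the achievable rate then degrades only to $\min_J I_{\text{c}}(B_iB_J\rangle B_{[m]\setminus(J\cup\{i\})})_\sigma$ rather than to something strictly worse. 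Handling precisely this case is the content of Dutil--Hayden's proof (random measurements by the helpers followed by Devetak--Winter hashing \cite{2005RSPSA.461..207D} between $i$ and $j$, analysed by decoupling) and of \cite[Cor.~6.13]{simultaneous-decoupling}; invoking that corollary, as you do, is not a proof step but a citation of essentially the statement being proven, and you yourself flag the remaining decoupling estimate as ``the main obstacle'' without closing it. Two further inaccuracies: (i) one cannot ``decouple $E$'' from anything --- no party can act on $E$ --- and the persistent correlation with $E$ is exactly why the rate is a coherent information rather than an entropy; what the decoupling argument actually controls is the correlation between the helpers' residual registers and the reference, so that the receiving party can decode; (ii) after the helpers are done, $i$ and $j$ in general share a \emph{mixed} state, so ``maximally entangled pairs'' do not appear automatically but require the final hashing (or free-merging) step, and the closing teleportation-fusion remark is a leftover from Theorems \ref{thm:GHZ-combing} and \ref{thm:EoA-to-GHZ} --- no fusion is needed for this lemma.
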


\begin{theorem}
\label{thm:EoA-to-GHZ}
Let $\rho^{A_1\ldots A_m}$ be held by $m$ parties. The following rate of GHZ is distillable under LOCC:
\begin{align*}
    D_{\text{EoA}}(\rho) =
        \max_{G=([m],E) \atop \text{spanning tree}}
        \left( \sum_{ij\in E}
        \frac{1}{E_A(i:j|\rho)} \right)^{-1}.
\end{align*}
\end{theorem}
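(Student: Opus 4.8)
The plan is to combine the Dutil--Hayden lower bound on pairwise entanglement of assistance (the preceding Lemma) with a spanning-tree construction that fuses pairwise EPR pairs into GHZ states by teleportation. Since the statement is an achievability (lower bound) result — $D_{\text{EoA}}(\rho)$ is a \emph{distillable} rate — my obligation is to exhibit an explicit LOCC protocol attaining this rate, not to prove optimality.

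First I would fix a spanning tree $G=([m],E)$ of the complete graph on the $m$ parties and observe that a tree on $m$ vertices has exactly $m-1$ edges, and that the graph is connected, so every pair of parties is joined by a unique path. The key idea is that to build a single $m$-party GHZ state it suffices to have one EPR pair across each edge $ij\in E$: pick any vertex as the ``seed'', prepare a local qubit in $\frac{1}{\sqrt2}(\ket0+\ket1)$, and then propagate the GHZ correlation along the tree by teleporting/CNOT-fusing across each edge. Concretely, one EPR pair per edge, consumed by one teleportation step per edge, yields exactly one $\ket{\Gamma_m}$; this is the standard graph-state/entanglement-routing fact and is where connectedness of the tree is essential.

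Next I would address the rate-matching bottleneck. To produce $k$ copies of $\ket{\Gamma_m}$ we need $k$ EPR pairs on \emph{each} edge $ij\in E$ simultaneously. By the Dutil--Hayden Lemma, edge $ij$ can supply EPR pairs at asymptotic rate $E_A(i:j|\rho)$, i.e.~from $n$ copies of $\rho$ we obtain about $n\,E_A(i:j|\rho)$ pairs across that edge. Running the assisted-distillation subprotocols on disjoint blocks of the $n$ copies — allocating a fraction $t_{ij}$ of the copies to edge $ij$, with $\sum_{ij\in E} t_{ij}=1$ — edge $ij$ yields roughly $n\,t_{ij} E_A(i:j|\rho)$ EPR pairs. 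To fuse these into GHZ states every edge must deliver the \emph{same} number $k$ of pairs, so we require $n\,t_{ij}E_A(i:j|\rho)=k$ for all edges, giving $t_{ij}=\frac{k}{n\,E_A(i:j|\rho)}$. Imposing $\sum_{ij\in E}t_{ij}=1$ forces
\begin{align*}
  \frac{k}{n} = \left(\sum_{ij\in E}\frac{1}{E_A(i:j|\rho)}\right)^{-1},
\end{align*}
which is precisely the rate attributed to this spanning tree. Optimising the free choice of spanning tree $G$ then gives $D_{\text{EoA}}(\rho)$, completing the achievability argument.

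The main subtlety, and the step I would be most careful about, is justifying that the $m-1$ assisted-distillation subprotocols can be run in parallel without destructive interference. Each invocation of the Dutil--Hayden bound for a fixed edge uses \emph{all} the other parties as assisting helpers performing LOCC; when several edges are distilled at once on disjoint copy-blocks this is unproblematic, since LOCC on disjoint systems composes freely and the $\varepsilon$-errors add subadditively, so the overall fidelity still tends to $1$ as $n\to\infty$. The teleportation-based fusion across each edge is exact up to Pauli corrections broadcast over the classical channel, so it introduces no additional asymptotic loss. Thus the only genuine constraints are the per-edge rate $E_A(i:j|\rho)$ and the global requirement of equalised EPR counts, and the harmonic-mean-type expression emerges directly from the time-sharing allocation above.
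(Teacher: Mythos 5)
Your proposal is correct and follows essentially the same route as the paper's proof: time-sharing the $n$ copies over the edges of a spanning tree (your $t_{ij}$ are the paper's $\lambda_e$), invoking the Dutil--Hayden assisted-distillation rate $E_A(i{:}j|\rho)$ per edge, equalising the per-edge EPR yields, and letting the normalisation $\sum_e \lambda_e = 1$ produce the harmonic-sum expression before optimising over trees. Your added remarks on fusing EPR pairs along the tree by teleportation and on the parallel composability of the block-wise LOCC subprotocols are details the paper leaves implicit, but they do not change the argument.
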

\begin{proof}
We use assisted entanglement distillation \cite{Dutil-Hayden,Dutil:PhD}, yielding rates $R_e = E_A(i:j)$ of EPR pairs between players $i$ and $j$, for the edges $e = ij\in E$ of a spanning tree $G=([m],E)$ on $m$ vertices. We apply this procedure on larger blocks of states for smaller $R_e$; the basic ingredient is time-sharing, as follows. Let $0\leq\lambda_e\leq 1$, $\sum_{e\in E}\lambda_e=1$. For $n$ initial states and the edge $e=ij\in E$, we use $\lambda_e n$ copies of the tensor product to distill entanglement between the parties $i$ and $j$ with the others helping by LOCC. For the whole block then, there are EPR pairs between $i$ and $j$ at asymptotic rate $\lambda_e R_e$. From these EPR pairs along the spanning tree $G$ we thus get an achievable rate of GHZ states $R := \min_e \lambda_e R_e$. To optimize this rate $R$, all the $\lambda_e R_e$ have to be equal, i.e.~$\lambda_e = \frac{R}{R_e}$. From the normalisation $\sum_e\lambda_e=1$ we finally obtain the result.
\end{proof}

\medskip
Notice that both Theorems \ref{thm:GHZ-combing} and \ref{thm:EoA-to-GHZ} rely on the idea that if you have EPR pairs along the edges of a connected graph of the $[m]$ notes of a network, then by teleportation a GHZ state can be constructed. Only that in the former result, due to the use of entanglement combing, the network is restricted to star-graphs anchored at an arbitrary node $i$; in the latter result restricting to star-graphs $G$ would potentially result in a lower rate. 
However, for a star-graph, the teleportation protocol can create an arbitrary $m$-qubit state, not only GHZ states.

\subsection{Genuine multipartite GHZ distillation}
\label{make-coherent}
In essence, the first concept behind making protocols coherent involves transforming classical symbols, represented by $x$, into basis states $\ket{x}$ within the Hilbert space. Functions $f : x \rightarrow f(x)$ then give rise to linear operators on the Hilbert space, with particular interest lying in permutations (or one-to-one functions) as they lead to unitaries (or isometries). The second notion revolves around achieving reversibility in classical computations by extending them into one-to-one functions. Lastly, we utilize local decoding operations, which are completely positive trace-preserving (cptp) maps represented by their isometric Stinespring dilations \cite{Stinespring}. In summary, making coherent allows us to replace
probabilistic mixtures by quantum superpositions, transforming a classical protocol working on individual letters into a set of unitaries. These unitaries then act as permutations on the basis states preserving coherent quantum superpositions. It is therefore conceivable that our secret key generating protocol could be converted into a (pure) entanglement generating protocol by executing all the steps coherently.

Another crucial element in our proof is the covering by constant type classes. To ensure self-containment, we provide a brief discussion of type classes and present the covering lemma.
For sequences of length $n$ from a finite alphabet $\mathcal{X}$, denoted generically as $x^n=x_1\ldots x_n\in\mathcal{X}^n$, we define the type of $x^n$ as the empirical distribution of letters in $x^n$. In other words, $p$ is the type of $x^n$ if
\begin{align*}
    \forall x\in\mathcal{X},\quad p(x) = \frac{1}{n}\abs{\{k:x_k=x\}}.
\end{align*}
The type class of $p$, denoted by $\mathcal{T}^n_p$, is defined as the set of all sequences of length $n$ with type $p$. Clearly, any type class can be obtained by considering all permutations of an arbitrary sequence with that type. The subsequent statement represents a basic characteristic of type classes:
\begin{align*}
    (1+n)^{-\abs{\mathcal{X}}}2^{nS(X)_p}\leq \abs{\mathcal{T}_p^n}\leq 2^{nS(X)_p},
\end{align*}
where $S(X)_p$ is the (Shannon) entropy of the random variable $X$.

\begin{lemma}[{Devetak~{\&}~Winter~\cite[Prop.~4]{2005RSPSA.461..207D}}]
\label{covering}
For a classical-quantum channel $G:\mathcal{X}\rightarrow B$ and a type $p$, let $U^{(j)}$ be i.i.d. according to the uniform distribution on the type class $\mathcal{T}^n_p$, $j=1,\ldots,M$. Define the state
\begin{align*}
       \sigma(p)=\frac{1}{\abs{\mathcal{T}^n_p}}\sum_{x^n\in\mathcal{T}_p^n}G^n_{x^n}
       =\mathbb{E}G^n_{U^{(j)}}.
\end{align*}
Then for every $\varepsilon,\delta>0$, and sufficiently large $n$, 
\begin{align*}
      \Pr&\left\{\left\|\frac{1}{M}\sum_{j=1}^{M}G^n_{U^{(j)}}-\sigma(p)\right\|_1\geq\varepsilon\right\} \\
       &\phantom{==========}
        \leq 2|B|^n\exp\left(-M\iota^n\frac{\varepsilon^2}{288\ln2}\right),
\end{align*}
where $\log \iota = -I(X;B)-\delta$. 
\end{lemma}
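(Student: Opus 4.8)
The plan is to prove Lemma~\ref{covering} as a standard operator Chernoff (Ahlswede--Winter) concentration statement for the empirical average of the i.i.d.\ random states $G^n_{U^{(j)}}$ around their common mean $\sigma(p)$. The essential difficulty is not the matrix tail bound itself but the preparatory step of replacing the (high-rank, spread-out) operators $G^n_{x^n}$ by pruned versions that are confined to a typical subspace and have a controlled operator norm, so that the Ahlswede--Winter inequality applies with exactly the exponent $\iota^n$.

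First I would introduce the relevant typical projectors. Let $\bar\sigma = \sum_x p(x)G_x$ be the average output state and write $S(B)=S(\bar\sigma)$, $S(B|X)=\sum_x p(x)S(G_x)$, so that $I(X;B)=S(B)-S(B|X)$. Since every $x^n\in\mathcal{T}^n_p$ has the same empirical distribution $p$, all conditionally typical projectors are taken with respect to the same $p$, which is the main convenience of working with a fixed type class. Denote by $\Pi$ the $\delta$-typical projector of $\bar\sigma^{\ox n}$ and by $\Pi_{x^n}$ the conditionally typical projector of $G^n_{x^n}$; these satisfy the usual properties $\Tr(G^n_{x^n}\Pi)\geq 1-\varepsilon$, $\Tr(G^n_{x^n}\Pi_{x^n})\geq 1-\varepsilon$, $\Tr\Pi\leq 2^{n(S(B)+\delta)}\leq|B|^n$, the gentle-measurement estimate $\left\|G^n_{x^n}-\Pi\Pi_{x^n}G^n_{x^n}\Pi_{x^n}\Pi\right\|_1\leq 2\sqrt{\varepsilon}$, together with the eigenvalue bounds $\Pi\bar\sigma^{\ox n}\Pi\geq 2^{-n(S(B)+\delta)}\Pi$ and $\Pi_{x^n}G^n_{x^n}\Pi_{x^n}\leq 2^{-n(S(B|X)-\delta)}\Pi_{x^n}$. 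Using the standard relation between type-class averages and i.i.d.\ states, $\sigma(p)$ inherits the same typical-subspace eigenvalue lower bound as $\bar\sigma^{\ox n}$ on the range of $\Pi$.

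Next I would define the pruned operators $\widetilde G^n_{x^n}=\Pi\Pi_{x^n}G^n_{x^n}\Pi_{x^n}\Pi$ and the rescaled variables $X_j=2^{n(S(B|X)-\delta)}\,\widetilde G^n_{U^{(j)}}$. The conditional-typicality bound gives $0\leq X_j\leq\1$, while the typicality lower bound on $\sigma(p)$ gives $\mathbb{E}X_j\geq 2^{-n(I(X;B)+2\delta)}\Pi = \iota^n\Pi$ after absorbing the factor $2\delta$ into the definition of $\iota$. At this point the Ahlswede--Winter operator Chernoff bound for i.i.d.\ operators $0\leq X_j\leq\1$ whose mean is bounded below by $\iota^n\Pi$ on a subspace of dimension at most $|B|^n$ yields, for any band width $\varepsilon'$, that $\frac1M\sum_j X_j$ lies outside $[(1-\varepsilon')\mathbb{E}X_j,(1+\varepsilon')\mathbb{E}X_j]$ only with probability at most $2|B|^n\exp\!\bigl(-M\iota^n\varepsilon'^2/(2\ln2)\bigr)$. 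Undoing the rescaling, this controls $\left\|\frac1M\sum_j\widetilde G^n_{U^{(j)}}-\Pi\sigma(p)\Pi\right\|_1$ up to order $\varepsilon'$, because $\Tr\Pi\sigma(p)\Pi\leq 1$; adding the gentle-measurement pruning errors for both the empirical average and for $\sigma(p)$ by the triangle inequality recovers a bound on $\left\|\frac1M\sum_j G^n_{U^{(j)}}-\sigma(p)\right\|_1$. Choosing $\varepsilon'$ a fixed fraction of $\varepsilon$ and taking $n$ large enough that the typicality errors are dominated by $\varepsilon$ gives the claim; the repeated triangle inequalities and the conversion from the multiplicative operator band to the additive trace-norm deviation are precisely what produce the universal constant $288\ln2$ in the exponent.

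The main obstacle I expect is the careful bookkeeping of these two reductions---pruning onto the typical subspace and rescaling---so that (a) the rescaled $X_j$ genuinely satisfy $0\leq X_j\leq\1$ with mean bounded below by exactly $\iota^n$, and (b) the multiplicative operator-Chernoff deviation can be converted into the additive trace-norm tail with an explicit constant, while simultaneously controlling the $O(\sqrt{\varepsilon})$ gentle-measurement errors incurred by replacing $G^n_{x^n}$ with $\widetilde G^n_{x^n}$. Everything else is a direct application of the established typical-subspace lemmas and the Ahlswede--Winter inequality.
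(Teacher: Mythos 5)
First, a point of context: the paper does not prove Lemma~\ref{covering} at all --- it is imported verbatim from Devetak and Winter \cite[Prop.~4]{2005RSPSA.461..207D} --- so your proposal can only be judged against that original argument. Your overall architecture does coincide with it: prune the $G^n_{x^n}$ with typical and conditionally typical projectors, rescale so the resulting operators lie in $[0,\1]$, apply the Ahlswede--Winter operator Chernoff bound, and undo the pruning with gentle-measurement and triangle-inequality estimates, the constant $288\ln 2 = 2\cdot 12^2\ln 2$ arising from a band parameter of order $\varepsilon/12$.

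However, there is a genuine gap at the one step on which everything pivots: the claim that the mean of the pruned, rescaled operators satisfies $\mathbb{E}X_j\geq \iota^n\Pi$, which is the hypothesis the operator Chernoff bound needs. Your justification fails for two independent reasons. (i) The ``standard relation between type-class averages and i.i.d.\ states'' goes the wrong way: from $\bar\sigma^{\otimes n}=\sum_q p^n(\mathcal{T}^n_q)\,\sigma(q)\geq p^n(\mathcal{T}^n_p)\,\sigma(p)$ one only gets $\sigma(p)\leq (n+1)^{|\mathcal{X}|}\bar\sigma^{\otimes n}$, an \emph{upper} bound; the lower bound $\Pi\sigma(p)\Pi\geq 2^{-n(S(B)+\delta)}\Pi$ you invoke is in fact false in general --- already for a classical channel $G_x=\ketbra{x}$ the support of $\sigma(p)$ is the span of the exact type class, a proper subspace of $\mathrm{ran}\,\Pi$, so $\Pi\sigma(p)\Pi$ has a kernel inside $\mathrm{ran}\,\Pi$ and admits no eigenvalue floor there. (ii) Even granting such a bound, inserting the conditional projectors $\Pi_{x^n}$ perturbs the mean by an operator $\Delta$ that gentle measurement controls only in trace norm, $\|\Delta\|_1=O(\sqrt{\varepsilon})$ --- a \emph{constant} --- whereas after rescaling by $c=2^{n(S(B|X)-\delta)}$ the floor you need on the pruned mean is $\iota^n/c\approx 2^{-nS(B)}$, exponentially small; the estimate $\widehat\sigma\geq\Pi\sigma(p)\Pi-O(\sqrt{\varepsilon})\,\1$ is therefore vacuous. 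The actual proof (and its textbook descendants) never establishes an eigenvalue bound on the pruned mean; instead the floor is built in by hand, e.g.\ by running the Chernoff bound on $\frac12\bigl(X_j+(d/D)\Pi\bigr)$ with $d=2^{n(S(B|X)-\delta)}$ and $D=\Tr\Pi\leq 2^{n(S(B)+\delta)}$, whose mean dominates $\frac{d}{2D}\Pi\approx\frac{\iota^n}{2}\Pi$ \emph{by construction}; the spurious $(d/D)\Pi$ term only affects the final constant, since $\Tr[(d/D)\Pi]\leq d$ is of the same order as the trace of the rescaled mean. With that device in place of your step (a), the rest of your outline goes through; but as written, the step you dismiss as ``bookkeeping'' is not bookkeeping --- the inequality you rely on is false, and a different mechanism is required.
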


Now we are prepared to present our main result on the distillable GHZ states in the following theorem.

\begin{theorem}
\label{thm:GHZ-rate}
For any state $\rho^{A_1\ldots A_m} = \Tr_E \psi^{A_1\ldots A_m E}$, 
as purified to the environment, and pure instruments 
$\mathcal{E}^{(j)}=\{\mathcal{E}^{(j)}_{x_j}\}_{x_j\in\mathcal{X}_j}$, for 
each player $j\in[m]$, meaning that each $\mathcal{E}^{(j)}_{x_j}:A_j\rightarrow A'_j$ 
is a cp map of Kraus rank one, i.e. $\mathcal{E}^{(j)}_{x_j}(\sigma)=E^{(j)}_{x_j}\sigma (E^{(j)}_{x_j})^{\dagger}$. let 
\[
  \omega^{X_{[m]}A'_{[m]}E} 
    = \sum_{x_{[m]}} \proj{x_{[m]}}^{X_{[m]}} \ox (\mathcal{E}_{x_{[m]}}\ox\id_E)\psi,
\]
where $\mathcal{E}_{x_{[m]}} = \mathcal{E}^{(1)}_{x_1}\ox\cdots\ox\mathcal{E}^{(m)}_{x_m}$.
Then, for any $j\in[m]$,
\begin{align}
\label{GHZ-cq}
  D(\rho) \geq S(X_1\ldots X_m|EA'_{[m]\setminus j})-R_{\text{CO}}^{\text{cq}},
\end{align}
where $R_{\text{CO}}^{\text{cq}}$ is the rate of communication for omniscience
of $X_{[m]}$ from Lemma \ref{cq-omniscience}.
\end{theorem}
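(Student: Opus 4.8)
The plan is to make coherent the non-interactive secret-key protocol of Theorem~\ref{secret-key cq}, following the bipartite template of Devetak and Winter \cite{2005RSPSA.461..207D} but now generating genuinely $m$-partite GHZ states rather than a bipartite key, exactly in the spirit of the ``making coherent'' discussion preceding the statement. The central observation is that a pure instrument $\mathcal{E}^{(j)}_{x_j}(\sigma)=E^{(j)}_{x_j}\sigma (E^{(j)}_{x_j})^{\dagger}$ admits the Stinespring isometry $V^{(j)}=\sum_{x_j}\ket{x_j}^{X_j}\ox E^{(j)}_{x_j}:A_j\rightarrow X_jA'_j$, so that applying all the $V^{(j)}$ to the purification $\psi^{A_{[m]}E}$ yields the pure state $\ket{\omega}^{X_{[m]}A'_{[m]}E}=\sum_{x_{[m]}}\ket{x_{[m]}}^{X_{[m]}}(E_{x_{[m]}}\ox\1_E)\ket{\psi}$, whose dephasing in the $X$-basis is exactly the cq-state $\omega$ of the statement. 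Every classical step of the key protocol---the measurements, the omniscience hashing, and the privacy amplification---is then replaced by its coherent (isometric, Stinespring-dilated \cite{Stinespring}) counterpart acting on these registers, with $n$ copies throughout.

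\textbf{Step 1 (coherent omniscience).} I would run the communication-for-omniscience protocol of Lemma~\ref{cq-omniscience} coherently: each player computes the prescribed hash of $X_j^n$ into a fresh register and sends a coherent copy, and because the underlying classical decoder for source coding with quantum side information succeeds with vanishing error, the coherified decoder reconstructs $X_{[m]}^n$ with fidelity $1-o(1)$ by the standard gentle-measurement argument. After this phase all $m$ players coherently share $X_{[m]}^n$, at a coherent communication cost of $nR_{\text{CO}}^{\text{cq}}+o(n)$, which will be deducted from the yield.

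\textbf{Step 2 (coherent privacy amplification and decoupling).} I designate player $j$ as the root, who retains its residual $A'^n_j$, while the residuals $A'^n_i$ of the leaves $i\neq j$ are discarded. Applying a random coherent hash within each constant-type class, as furnished by the covering Lemma~\ref{covering} with the complementary (``adversary'') system taken to be $E^nA'^n_{[m]\setminus j}$, the covering bound guarantees that, provided the surviving register has rate below $S(X_{[m]})-I(X_{[m]};EA'_{[m]\setminus j})=S(X_{[m]}|EA'_{[m]\setminus j})$, the post-hash state on $E^nA'^n_{[m]\setminus j}$ is $\varepsilon$-close to the fixed type-class average and hence decoupled from the surviving registers. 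Purity of the global state then forces the surviving, coherently shared and perfectly correlated registers of all $m$ players into the maximally coherent form $\ketbra{\Gamma_m}^{\ox k}$ up to vanishing error. Subtracting the omniscience cost yields the rate $S(X_{[m]}|EA'_{[m]\setminus j})-R_{\text{CO}}^{\text{cq}}$ for the chosen root $j$, which, holding for arbitrary $j$, is the claim.

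The hard part will be Step 2: one must verify that coherent privacy amplification against the \emph{enlarged} system $E A'_{[m]\setminus j}$ produces a genuine $m$-partite GHZ state, not merely bipartite entanglement between the root and a single leaf, and that the identification of the distilled object as $\ketbra{\Gamma_m}^{\ox k}$ follows from the simultaneous decoupling of all branches. This requires tracking the global purity and combining the covering-lemma failure probability with the gentle-measurement errors of the coherent omniscience decoder so that all $m$ branches succeed at once. The asymmetric conditioning on $A'_{[m]\setminus j}$ but not $A'_j$ reflects precisely that the root keeps its quantum residual coherent while the leaves' residuals are sacrificed into the environment, which is why the GHZ rate falls strictly below the naive secret-key rate $S(X_{[m]}|E)-R_{\text{CO}}^{\text{cq}}$ of Theorem~\ref{secret-key cq}.
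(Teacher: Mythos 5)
Your outline follows the paper's own strategy in its skeleton: coherent Stinespring implementation of the pure instruments, coherent omniscience via Lemma \ref{cq-omniscience} with gentle-measurement error bounds, and a partition of a (measured) type class into $|\mathcal{K}|$ blocks of size $|\mathcal{S}|\approx 2^{nI(X_{[m]};EA'_{[m]\setminus j})}$ analysed with the covering Lemma \ref{covering} against the ``adversary'' system $E^nA'^n_{[m]\setminus j}$. However, Step 2 contains a genuine gap: decoupling of the adversary from the block label $k$, plus global purity, does \emph{not} ``force'' the surviving registers into GHZ form, and your treatment of the leaves' copies of the within-block index $s$ (they are ``discarded'') would actually break the argument.

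Concretely, after omniscience and the relabelling $x_{[m]}^n\mapsto(k,s)$, \emph{every} player holds a register $\ket{k,s}$, so the global state is $\frac{1}{\sqrt{|\mathcal{K}||\mathcal{S}|}}\sum_{k,s}\ket{k,s}^{\ox m}\ox\ket{\widehat\psi_{x(k,s)}}$. If the leaves discard their $s$-parts, the state becomes a mixture over $s$, and for each \emph{fixed} $s$ the adversary states $\widehat\psi^{A'^n_{[m]\setminus j}E^n}_{x(k,s)}$ for different $k$ are in general far apart; the covering lemma only controls the \emph{block averages} $\frac{1}{|\mathcal{S}|}\sum_s\widehat\psi_{x(k,s)}$, so no local correction at the root exists, and tracing out everything but the $k$-registers leaves the classically correlated state $\frac{1}{|\mathcal{K}|}\sum_k\ketbra{k}^{\ox m}$ rather than GHZ. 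The paper's proof supplies two mechanisms that your proposal omits: (i) all players except $j$ measure their $s$-components in the Fourier-\emph{conjugate} basis, announce the outcomes, and player $j$ applies the corresponding phase shift, which coherently concentrates the superposition over $s$ in player $j$'s hands, yielding $\ket{\Theta}=\frac{1}{\sqrt{|\mathcal{K}|}}\sum_k\ket{k}^{\ox m}\ox\ket{\theta_k}$ with $\ket{\theta_k}=\frac{1}{\sqrt{|\mathcal{S}|}}\sum_s\ket{s}\ket{\widehat\psi_{x(k,s)}}$, whose adversary marginal is exactly the block average $\nu_k$ that Lemma \ref{covering} makes close to a fixed $\sigma(q)$; and (ii) player $j$ then applies an \emph{active} controlled correction $\sum_k\ketbra{k}\ox U_k$, with $U_k$ the Uhlmann unitaries mapping each $\ket{\theta_k}$ near a common purification $\ket{\zeta}$ of $\sigma(q)$. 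Only after this controlled unitary does the state factorize into $\bigl(\frac{1}{\sqrt{|\mathcal{K}|}}\sum_k\ket{k}^{\ox m}\bigr)\ox\ket{\zeta}$, i.e.\ GHZ states at rate $S(X_{[m]}|EA'_{[m]\setminus j})-R_{\text{CO}}^{\text{cq}}$. Your entropic bookkeeping is correct, but these two steps — the conjugate-basis transfer of $s$ to the root and the $k$-controlled Uhlmann correction — are precisely the missing content; ``purity forces GHZ'' is not a valid substitute for them.
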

\begin{proof}
Let $\ket{\psi}^{A_1\ldots A_mE}$ be a purification of $\rho^{A_1\ldots A_m}$.
\begin{figure*}[t]
\begin{equation}\begin{split}
  \label{eq:psi-bar}
  \ket{\overline{\psi}} = \sum_{x_{[m]}^n} \sqrt{p^n(x_{[m]}^n)} \ket{\ell_{[m]}}^{\otimes m}
    &\left( \sum_{\forall i\, \xi_i^n\in f_i^{-1}(\ell_i)} 
               \sqrt{\Delta^{(1,\ell_{[m]})}_{\xi_{[m]}^n}} \otimes
                      \ket{\xi_{[m]}^{n}} \right) \\
    &\phantom{=}
    \otimes \cdots \\
    &\phantom{=}
     \otimes\left( \sum_{\forall i\, \xi_i^n\in f_i^{-1}(\ell_i)}
     \sqrt{\Delta^{(m,\ell_{[m]})}_{\xi_{[m]}^n}} \ket{x_{m}^{n}} \otimes \ket{\xi_{[m]}^{n}} \right) \ket{x^n_{[m]}}\ket{\widehat{\psi}_{x_{[m]}^n}}^{A_{m}^{\prime n}E^n}.
\end{split}\end{equation}
\hrulefill
\end{figure*}
The protocol starts by each player applying its instrument coherently on its system. Given the rank one Kraus operators $\{E_{x_j}^{j}\}_j$, the coherent instruments result in isometries $V_i:A_{i}\hookrightarrow A'_{i}\otimes X_{i}$ defined as $V_i = \sum_{x_{i}\in\mathcal{X}_{i}} E^{(i)}_{x_i} \otimes \ket{x_i}$. 
The isometries act as follows on a single copy:
\begin{align*}
\ket{\widehat{\psi}}&= (V_{1}\otimes\cdots\otimes V_{m}\otimes \1^{E})\ket{\psi}^{A_{[m]}E}\\
                    &= \sum_{x_{[m]}} (E^{(1)}_{x_{1}}\otimes\cdots\otimes E^{(m)}_{x_{m}}\otimes \1^{E})\ket{\psi}^{A_{[m]}E}
                    \otimes\ket{x_{[m]}}\\
                    &= \sum_{x_{[m]}} \sqrt{p(x_{[m]})}\ket{\widehat{\psi}_{x_{[m]}}}^{A'_{[m]}E}\otimes\ket{x_{[m]}},
\end{align*}
where $E^{([m])}_{x_{[m]}}=E^{(1)}_{x_{1}}\otimes\cdots\otimes E^{(m)}_{x_{m}}$ and 
\begin{align*}
p(x_{[m]}) 
  &= \bra{\psi}(E^{([m])}_{x_{[m]}}\otimes\1)^{\dagger}
  (E^{([m])}_{x_{[m]}}\otimes\1)\ket{\psi},\\
\ket{\widehat{\psi}_{x_{[m]}}}^{A'_{[m]}E}
  &= \frac{1}{\sqrt{p(x_{[m]})}}(E^{([m])}_{x_{[m]}}\otimes\1)\ket{\psi}^{A_{[m]}E}.
\end{align*}
The instruments are applied independently on each copy, therefore with $n$ copies of the initial pure state, we want to distill GHZ states from $n$ copies of $\ket{\widehat{\psi}}$:
\begin{align*}
  \ket{\widehat{\psi}}^{\otimes n}
    = \sum_{x_{[m]}^n} \sqrt{p^n(x_{[m]}^n)} \ket{x_{1}^n}\cdots\ket{x_{m}^n}
    \otimes\ket{\widehat{\psi}_{x_{[m]}^n}}^{A_{[m]}^{\prime n}E^n} \!\!\!,
\end{align*}
where systems $A_{[m]}^{\prime n}$ in $\ket{\widehat{\psi}_{x_{[m]}^n}}^{A_{[m]}^{\prime n}E^n}$ are the quantum side information
at the disposal of the players to help them with their respective decoding, and system $E$ is the eavesdropper's quantum side information.

The next step is to achieve omniscience, where each player coherently computes its hash value
and broadcasts it coherently to the other players via teleportation through GHZ states. In detail,
let $f_{j}:\mathcal{X}_j^n \rightarrow \mathcal{L}_j$ be the Slepian-Wolf 
hash function used by party $j$ in the classical 
part of the protocol of Lemma \ref{cq-omniscience} (omniscience), 
and $\left(\Delta^{(j,\ell_{[m]})}_{x_{[m]}^n}:x_{[m]}^n\right)$ the POVM (decision rule) 
that they use to recover $x_{[m]}^n$ when the classical messages 
$\ell_{[m]}$ are broadcast. 
Each party $j$ will apply an isometry based on the mappings 
$x_{j}^{n} \longmapsto (f_{j}(x_{j}^{n}),x_{j}^{n})$ for $j\in[m]$, namely
\begin{align*}
  W_j = \sum_{x_{j}^{n}} \ket{f_{j}(x_{j}^{n}),x_{j}^{n}}\!\bra{x_{j}^{n}},
\end{align*}
where $\ket{\ell}=\ket{f_{j}(x_{j}^{n})}$ are computational basis for some Hilbert 
space $U_j = \text{span}\{\ket{\ell} : \ell\in \mathcal{L}_j\}$.

The state at the end of this step is 
\begin{align*}
  &\ket{\psi'} = \sum_{x_{[m]}^n} \sqrt{p^n(x_{[m]}^n)}
                 \ket{x_{1}^{n},f_{1}(x_{1}^{n})}\cdots\ket{x_{m}^{n},f_{m}(x_{m}^{n})}\\
  &\phantom{====================}
   \otimes \ket{\widehat{\psi}_{x_{[m]}^n}}^{A_{[m]}^{\prime n}E^n}.
\end{align*}
Next, the coherent transmission of the hash value $\ell_j$ to other parties follows, effectively implementing a multi-receiver cobit channel.
\cite{PhysRevLett.92.097902}, 
i.e. party $j$ aims to implement the isometry
$\ket{\ell_j} \longmapsto \ket{\ell_j}^{\otimes m}$. 
This multi-receiver cobit channel can be implemented by utilizing GHZ states for teleportation. In order to coherently transmit $nR_{j}$ bits, 
where $R_{j}\coloneqq \frac{1}{n}\log\abs{\mathcal{L}_j}$, 
$nR_{j}$ GHZ states are needed, i.e. the following state:
\begin{align*}
  \ket{\Gamma_{m}}^{\otimes nR_{j}}
    =\left(\frac{1}{\sqrt{2}}(\ket{0}^{\otimes m}+\ket{1}^{\otimes m})\right)^{\otimes nR_{j}}.
\end{align*}
After implementing the multi-receiver cobit channel, the $j$-th party possesses its initial share $\ket{x_{j}^{n}}$, along with all the hash values broadcast to it. Consequently, the overall state is as follows:
\begin{align*}
  \ket{\widetilde{\psi}} 
    &=\sum_{x_{[m]}^n} \sqrt{p^n(x_{[m]}^n)} 
        \ket{x_{1}^{n},f_{1}(x_{1}^{n})\ldots f_{m}(x_{m}^{n})}\otimes \cdots\\
        &\phantom{=======}
         \otimes \ket{x_{m}^{n},f_{1}(x_{1}^{n})\ldots f_{m}(x_{m}^{n})} 
        \otimes \ket{\widehat{\psi}_{x_{[m]}^n}}^{A_{[m]}^{\prime n}E^n}
\end{align*} 
Having received the hash values, the parties then proceeds to recover $x_{[m]}^n$ locally. Each party independently runs its Slepian-Wolf decoder coherently to deduce the $\ket{x^{n}_{j}}$ values of the other $m-1$ parties. Specifically, the $j$-th party applies the following controlled isometry on its corresponding systems:
\begin{align*}
 \sum_{\ell_{[m]}} \ketbra{\ell_{[m]}} \otimes W_{D}^{(j,\ell_{[m]})},
\end{align*}
where the coherent measurement isometry of the $j$-th party is defined as:
\begin{align}
 W_{D}^{(j,\ell_{[m]})}
   &=\sum_{\forall i\in [m]\, \xi_i^n\in f_i^{-1}(\ell_i)} \sqrt{\Delta^{(j,\ell_{[m]})}_{\xi_{[m]}^n}} \otimes \ket{\xi_{[m]}^n},
\end{align}
with $\Delta^{(j,\ell_{[m]})}_{\xi_{[m]}^n}$ the POVM elements of the $j$-th decoder, which acts on $X_j^n{A'{}}^{n}_j$.
The classical-quantum omniscience result of \cite{FSAW:GHZ} guarantees successful decoding if the rates $R_{[m]}$ fulfill the conditions of Lemma \ref{cq-omniscience}. 
The state after each party has applied their decoding isometry is $\ket{\overline\psi}$ displayed in Eq.~\eqref{eq:psi-bar} at the top of the page,
where we have utilized the notation $\ell_{[m]}=f_{1}(x_{1}^{n})\ldots f_{m}(x_{m}^{n})$.
The coherent gentle measurement lemma \cite{796385,4544968} ensures that for each party $j\in [m]$, if the decoding error is no greater than $\varepsilon_1$ (which is guaranteed by Lemma \ref{cq-omniscience}), the following two states are $2\sqrt{\varepsilon_1(2-\varepsilon_1)}$-close in trace distance:
\begin{align*}
&\sum_{x^n_{[m]}} \sqrt{p(x^n_{[m]})} \ket{\ell_{[m]}}\ket{x^n_{[m]\setminus j}} \\
&\phantom{===:} \otimes
 \sum_{\forall i\, \xi_i^n\in f_i^{-1}(\ell_i)} \sqrt{\Delta^{(j,\ell_{[m]})}_{\xi_{[m]}^n}}\ket{x_{j}^{n}} \ket{\widehat{\psi}_{x_{[m]}^n}}^{A_{[m]}^{\prime n}E^n}
\otimes\ket{\xi_{[m]}^{n}} \!,
\end{align*}
and
\begin{align*}
\sum_{x_{[m]}^n} \sqrt{p(x_{[m]}^n)} \ket{\ell_{[m]}} \ket{x_{[m]}^{n}} 
\ket{\widehat{\psi}_{x_{[m]}^n}}^{A_{[m]}^{\prime n}E^n} \otimes \ket{x_{[m]}^{n}}.
\end{align*}
By using triangle inequality for the trace distance $m$ times, the state $\ket{\overline{\psi}}$ will be $2m\sqrt{\varepsilon_1(2-\varepsilon_1)}$-close in trace distance to the following state:
\begin{align*}
  \ket{\widehat{\psi}}
   &= \sum_{x_{[m]}^n} \sqrt{p(x^{n}_{[m]})}\ket{x_{[m]}^n,\ell_{[m]}} \cdots \ket{x_{[m]}^n,\ell_{[m]}} \ket{\widehat{\psi}_{x_{[m]}^n}}^{A_{[m]}^{\prime n}E^n}\!\!.
\end{align*}
All parties proceed to clean up their $L_{[m]}$-registers through the application of local unitaries. Specifically, they extend the isometries $W_j: \ket{x^{n}_{j}}\ket{0}^{E}\mapsto \ket{x^{n}_{j}}\ket{f_{j}(x^{n}_{j})}$ to unitaries by defining $\ket{x^{n}_{j}}\ket{i}^{E}\mapsto \ket{x^{n}_{j}}\ket{i+f_{j}(x^{n}_{j})}$, where the addition is performed within an abelian group on the ancillary register (e.g. integers modulo $|\mathcal{L}_j|$). The state will be transformed into
\begin{align*}
     \ket{\widetilde{\psi}}
    &= \sum_{x_{[m]}^n} \sqrt{p(x^{n}_{[m]})}\ket{x_{[m]}^n} \cdots \ket{x_{[m]}^n}  \ket{\widehat{\psi}_{x_{[m]}^n}}^{A_{[m]}^{\prime n}E^n},
\end{align*}
with residual states $\ket{\widehat{\psi}_{x_{[m]}^n}}$ on ${A_{[m]}^{\prime n}}$ and $E^n$. 
One of the players measures the joint type $q$ non-destructively and informs the other players
about the result.
The protocols aborts if $q$ is not typical, i.e. if $\|p-q\|_1 \geq \delta$.

This leaves the players sharing the post-measurement state
\begin{align}
\label{omni-reached}
\frac{1}{\sqrt{|\mathcal{T}_{q}^n}|}
\sum_{x_{[m]}^n\in\mathcal{T}_{q}^n}\ket{x_{[m]}^n}\cdots\ket{x_{[m]}^n}\otimes\ket{\widehat{\psi}_{x_{[m]}^n}}^{A_{[m]}^{\prime n}E^n}.
\end{align}
We now consider, for each $j\in [m],$ a random partition of the type 
 class $\mathcal{T}_q^n $ into $|\mathcal{K}|$ blocks of size $|\mathcal{S}|$, where
\begin{align*}
|\mathcal{K}||\mathcal{S}|&=|\mathcal{T}_{q}^n|\approx 2^{nH(X_{[m]})},\\
|\mathcal{S}|&\approx2^{nI(X_{[m]};EA'_{[m]\setminus j})}.
\end{align*}
Define an isometry
$
\sum_{x_{[m]}^n\in\mathcal{T}_{q}^n}\ket{k(x_{[m]}^n),s(x_{[m]}^n)}\bra{x_{[m]}^n},
$
where $k:\mathcal{T}_q^n\rightarrow
\mathcal{K}$ labels the block and $s:\mathcal{T}_q^n\rightarrow
\mathcal{S}$ labels the elements within each block, such that there is a one-to-one correspondence between
$(k,s)$ and $x_{[m]}^n(k,s)$. 
All players apply this unitary locally, evolving the state to 
\begin{align*}
\frac{1}{\sqrt{|\mathcal{K}||\mathcal{S}|}}
\sum_{k,s}\ket{k,s}\cdots\ket{k,s}\otimes\ket{\widehat{\psi}_{x_{[m]}^n(k,s)}}^{A_{[m]}^{\prime n}E^n}.
\end{align*}
All players but the distinguished player $j$
measure the $s$-component of their registers in the Fourier
conjugate basis:
\begin{align*}
\left\{\ket{\widehat{t}}=
\frac{1}{\sqrt{|\mathcal{S}|}}\sum_{s=1}^{|\mathcal{S}|}e^{2\pi i st\slash|\mathcal{S}|}\ket{s}:t=1,\ldots,|\mathcal{S}|\right\},
\end{align*}
and inform player $j$ about their results $t_{[m]\setminus j}$, who in turn
applies the phase shift operator
\begin{align*}
\sum_{s=1}^{|\mathcal{S}|}e^{-2\pi i s\slash|\mathcal{S}| \sum_{z\in[m]\setminus j}t_z}\ketbra{s}.
\end{align*}
So far we obtained the following state:
\begin{align*}
\frac{1}{\sqrt{|\mathcal{K}||\mathcal{S}|}}
\sum_{k,s}\ket{k}\cdots\ket{k,s}\cdots\ket{k}\otimes\ket{\widehat{\psi}_{x_{[m]}^n(k,s)}}^{A_{[m]}^{\prime n}E^n}.
\end{align*}
Absorbing $s$-component of player $j$ into $A^{\prime n}_j$, the above state can be written as:
\begin{align}
\label{raw-ghz}
\ket{\Theta}=\frac{1}{\sqrt{|\mathcal{K}|}}
\sum_{k}\ket{k}\cdots\ket{k}\otimes\ket{\theta_{x_{[m]}^n(k,s)}},
\end{align}
where
\begin{align*}
\ket{\theta_{x_{[m]}^n(k,s)}}=
\frac{1}{\sqrt{|\mathcal{S}|}}\sum_{s=1}^{|\mathcal{S}|}\ket{s}\otimes\ket{\widehat{\psi}_{x_{[m]}^n(k,s)}}^{A_{[m]}^{\prime n}E^n}.
\end{align*}
The reduced states on $A^{\prime n}_{[m]\setminus j}E^n$ of 
$\theta_{x_{[m]}^n(k,s)}$
(for each $k$) is
\begin{align*}
\Tr_{A^{\prime n}_j}\theta_{x_{[m]}^n(k,s)}=
\nu_{k}^{A^{\prime n}_{[m]\setminus j}E^n}=
\frac{1}{|\mathcal{S}|}\sum_{s=1}^{|\mathcal{S}|}\widehat{\psi}_{x_{[m]}^n(k,s)}^{A^{\prime n}_{[m]\setminus j}E^n}.
\end{align*}
According to the constant type covering given by Lemma \ref{covering}, 
for all $k$, if $\log\mathcal{|S|}\geq n(I(X_{[m]};A'_{[m]\setminus j}E)+\delta)$, then
\begin{align*}
\frac{1}{2}\left\|\nu_{k}^{A^{\prime n}_{[m]\setminus j}E^n}-\sigma(q)\right\|_{1}\leq\varepsilon_2,
\end{align*}
where 
\begin{align*}
\sigma(q)=
\frac{1}{|\mathcal{T}_{q}^n|}\sum_{x_{[m]}^n\in\mathcal{T}_{q}^n}\Tr_{A^{\prime n}_j}\theta_{x_{[m]}^n}^{A^{\prime n}_{[m]}E^n},
\end{align*}
in which $\Tr_{A^{\prime n}_j}\theta_{x_{[m]}^n}^{A^{\prime n}_{[m]}E^n}$ is understood from Eq.~\eqref{omni-reached}. 
From the relation between trace distance and fidelity,
we obtain $F(\nu_{k}^{A^{\prime n}_{[m]\setminus j}E^n},\sigma(q))\geq 1-\varepsilon_2$.
Let $\ket{\zeta}^{RA'_{[m]\setminus j}E}$ be a purification of $\sigma(q)$ with purifying system $R$.
Since mixed-state fidelity equals the maximum pure-state fidelity over all purifications
of the mixed states, and all purifications are related by unitaries on the purifying systems, 
there are unitaries $U_k$ for player $j$, one for each $k\in\mathcal{K}$, such that 
\begin{align*}
F\left((U_k\otimes\1^{A^{\prime n}_{[m]\setminus j}E^n})\ket{\theta_{x_{[m]}^n(k,s)}},\ket{\zeta}\right)\geq 1-\varepsilon_2.
\end{align*}
This means that if $j$ applies $\sum_{k}\ketbra{k}\otimes U_k$ to its share of $\ket{\Theta}$ in Eq.~\eqref{raw-ghz}, this state is transformed into a state $\ket{\Theta'}$ such that
\begin{align*}
F\left(\ket{\Theta'},\frac{1}{\sqrt{|\mathcal{K}|}}\sum_{k}\ket{k}\cdots\ket{k}\otimes\ket{\zeta}\right)\geq 1-\varepsilon_2.
\end{align*}
Non-typical type happens with vanishing probability; in the event of typical type class, $m$ players
distill GHZ state of rate
\begin{align*}
H(X_{[m]})-I(X_{[m]};A'_{[m]\setminus j}E)-R_{\text{CO}}^{cq}.
\end{align*}
This concludes the proof.
\end{proof}

The following Theorem is a special case of Theorem \ref{thm:GHZ-rate} where players use full local measurements consisting of rank-one operators and communication.
\begin{theorem}
\label{thm:GHZ-rate-c}
For any state $\rho^{A_1\ldots A_m} = \Tr_E \psi^{A_1\ldots A_m E}$, 
as purified to the environment, with the notation of Theorem \ref{secret-key c} and where all local measurements are assumed to consist of rank-one POVM elements,
\begin{align}
\label{GHZ-c}
  D_{\text{n.i.}}(\rho) \geq S(X_1\ldots X_m|E)-R_{\text{CO}}^{\text{c}}.
\end{align}
\end{theorem}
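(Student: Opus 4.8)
The plan is to derive Theorem~\ref{thm:GHZ-rate-c} directly as the announced special case of Theorem~\ref{thm:GHZ-rate}, by instantiating the pure instruments to rank-one full measurements and tracking how the quantum side-information registers $A'_{[m]}$ collapse. First I would fix, for each player $j$, a POVM $\{M^{(j)}_{x_j}\}_{x_j}$ whose elements are rank one, say $M^{(j)}_{x_j}=\proj{\beta_{x_j}}$ with $\sum_{x_j}\proj{\beta_{x_j}}=\1_{A_j}$. The key observation is that the coherent implementation of such a measurement carries no residual quantum output: the natural Stinespring isometry is $V_j=\sum_{x_j}\ket{x_j}^{X_j}\bra{\beta_{x_j}}:A_j\hookrightarrow X_j$, and $V_j^\dagger V_j=\sum_{x_j}\proj{\beta_{x_j}}=\1_{A_j}$ confirms that it maps isometrically into the purely classical register $X_j$ alone. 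In other words, rank-one-ness is exactly what makes the post-measurement system $A'_j$ trivial, so that the general cq-state $\omega$ of Theorem~\ref{thm:GHZ-rate} degenerates to the $A'$-free state $\omega'$ of Eq.~\eqref{cq-state}.

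With $A'_{[m]}$ trivial, I would then simplify the two quantities in the bound \eqref{GHZ-cq}. The information term collapses immediately, $S(X_{[m]}|EA'_{[m]\setminus j})_\omega=S(X_{[m]}|E)_{\omega'}$, since the conditioning register $A'_{[m]\setminus j}$ is absent. For the omniscience cost I would show that the region $\mathcal{R}_{\text{cq}}$ reduces to $\mathcal{R}_{\text{c}}$: each constraint $\sum_{i\in J}R_i\geq S(X_J|X_{[m]\setminus J},A'_j)_\omega$ has trivial $A'_j$ and hence reads $\sum_{i\in J}R_i\geq H(X_J|X_{[m]\setminus J})_{\omega'}$. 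Since this right-hand side no longer depends on the choice of $j\in[m]\setminus J$, quantifying over all such $j$ is redundant, and the family of constraints is exactly $\{\forall J\subsetneq[m]:\ \sum_{i\in J}R_i\geq H(X_J|X_{[m]\setminus J})\}$, i.e.\ $\mathcal{R}_{\text{c}}$; hence $R_{\text{CO}}^{\text{cq}}=R_{\text{CO}}^{\text{c}}$. Substituting both simplifications into \eqref{GHZ-cq} yields the claimed $S(X_{[m]}|E)_{\omega'}-R_{\text{CO}}^{\text{c}}$, and since the underlying protocol is non-interactive this is in fact a bound on $D_{\text{n.i.}}(\rho)$.

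I expect the only genuine subtlety --- the part worth spelling out rather than pure bookkeeping --- to be the claim that a rank-one POVM leaves no nontrivial residual register once made coherent. If one instead used rank-one Kraus operators with a formally retained one-dimensional output, the residual state $\ket{\alpha_{x_j}}$ would be a fixed function of $x_j$; one must then verify that it contributes nothing to $S(X_{[m]}|EA'_{[m]\setminus j})$, which is clear once $A'_{[m]\setminus j}$ is discarded but would \emph{fail} to give the clean expression $S(X_{[m]}|E)$ if the $\ket{\alpha_{x_j}}$ were kept and allowed to be non-orthogonal (they could then leak partial information about $X_{[m]\setminus j}$ into the conditioning system). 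Making explicit that ``full measurement'' means the quantum output is traced out --- equivalently, that $A'_j$ is one-dimensional --- removes this ambiguity and closes the argument, the remaining steps being routine substitution into the already-established Theorem~\ref{thm:GHZ-rate}.
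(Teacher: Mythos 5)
Your proposal is correct and matches the paper's treatment: the paper gives no separate proof of Theorem~\ref{thm:GHZ-rate-c}, presenting it exactly as the specialization of Theorem~\ref{thm:GHZ-rate} to rank-one full measurements, which is what you carry out --- taking $A'_j$ one-dimensional via the isometry $V_j=\sum_{x_j}\ket{x_j}\bra{\beta_{x_j}}$, collapsing $S(X_{[m]}|EA'_{[m]\setminus j})$ to $S(X_{[m]}|E)_{\omega'}$, and observing that $\mathcal{R}_{\text{cq}}$ degenerates to $\mathcal{R}_{\text{c}}$ so that $R_{\text{CO}}^{\text{cq}}=R_{\text{CO}}^{\text{c}}$. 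Your added caveat that the residual post-measurement states must be genuinely trivial (traced out) rather than retained as possibly non-orthogonal vectors $\ket{\alpha_{x_j}}$ is a correct and worthwhile clarification of what ``full measurement'' must mean for the stated formula to emerge.
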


\medskip
Comparing the rate \eqref{GHZ-cq} with the analogous one from \cite{GHZ:ISIT2020,FSAW:GHZ}, we are disappointed to see that we should have to condition on all but one of the $A'$-registers. Why was that not needed in the pure state case? The 
reason is that there we could decouple the block of $n$ systems $A_{[m]}^{\prime n}$
completely by first measuring the type class of $X_{[m]}^n$, which after the 
omniscience phase every player can do, and applying the same controlled permutation 
of the $A_j^{\prime n}$, controlled by the $j$-th player's copy of $X_{[m]}^n$, 
transforming $\psi_{x_{[m]}^n}^{A_{[m]}^{\prime n}}$ to a standard state that 
depends only on the type class of $x_{[m]}^n$. If there is correlation 
with $E^n$, this does not work, unless one would perform the same permutation 
on the eavesdropper's systems, which however are unaccessible to the legal 
players. 
Still, taking this possibility into account, 
we can achieve the following potentially improved rate compared to Theorem \ref{thm:GHZ-rate}:

\begin{theorem}
\label{thm:better-GHZ-rate} 
Under the same assumptions as Theorem \ref{thm:GHZ-rate}, and for $n$ 
i.i.d. repetitions, consider unitaries $U^{(i)}_{x_{[m]}^n}$ on $A_i^{\prime n}$,
for all $i=1,\ldots,m$ (for instance permutations of the $n$ systems). 
Then, for the state 
\begin{equation*}\begin{split}
  &\widetilde{\omega}^{X_{[m]}^nA_{[m]}^{\prime n}E^n} \\
  &\phantom{=}
   = \sum_{x_{[m]}^n} \proj{x_{[m]}^n}^{X_{[m]}^n}
         \ox \left(U^{(1)}_{x_{[m]}^n}\ox\cdots\ox U^{(m)}_{x_{[m]}^n}\ox\1\right) \\
  &\phantom{=================}      \bigl((\mathcal{E}_{x_{[m]}^n}\ox\id_{E^n})\psi^{\ox n}\bigr) \\
  &\phantom{===============:}
  \left(U^{(1)}_{x_{[m]}^n}\ox\cdots\ox U^{(m)}_{x_{[m]}^n}\ox\1\right)^\dagger,
\end{split}\end{equation*}
and any $j\in[m]$,
\begin{align*}
  \phantom{====}
  D_{\text{n.i.}}(\rho) 
    \geq \frac1n S(X_{[m]}^n|E^nA_{[m]\setminus j}^{\prime n})_{\widetilde{\omega}} - R_{\text{CO}}^{\text{cq}}.
  \phantom{=====}
  \blacksquare
\end{align*}
\end{theorem}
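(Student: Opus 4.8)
The plan is to re-run the protocol in the proof of Theorem~\ref{thm:GHZ-rate} almost verbatim, inserting a single extra coherent step immediately after omniscience has been reached. The one new idea, already anticipated in the discussion preceding the statement, is this: once every player holds a copy of the full classical record $x_{[m]}^n$ (the post-omniscience state of Eq.~\eqref{omni-reached}), each player $i$ can apply, \emph{locally and without any further communication}, the controlled unitary $\sum_{x_{[m]}^n}\proj{x_{[m]}^n}\ox U^{(i)}_{x_{[m]}^n}$ to their own side-information register $A_i^{\prime n}$, the control being their shared copy of $x_{[m]}^n$. Since all players condition on the same $x_{[m]}^n$ and the classical registers are untouched, this turns the residual conditional states $\ket{\widehat\psi_{x_{[m]}^n}}$ into the permuted residual states $\ket{\chi_{x_{[m]}^n}} := \bigl(\bigotimes_i U^{(i)}_{x_{[m]}^n}\ox\1_E\bigr)\ket{\widehat\psi_{x_{[m]}^n}}$ underlying $\widetilde\omega$, leaving the $X$-marginals invariant.

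First I would run the coherent omniscience phase of Theorem~\ref{thm:GHZ-rate} at the \emph{single-copy} level of $\omega$: applying the instruments copy by copy and hashing via Lemma~\ref{cq-omniscience} achieves omniscience of the i.i.d.\ data $X_{[m]}$ at communication cost $R_{\text{CO}}^{\text{cq}}$ per copy. This is the crucial accounting point: because the extra unitaries touch only the $A'$-registers and are applied \emph{afterwards}, they cannot alter the classical data, so the omniscience bill stays at the finer $R_{\text{CO}}^{\text{cq}}$ rather than a coarser block rate. Next I would insert the controlled unitaries above, obtaining the coherent analogue of Eq.~\eqref{omni-reached} built from $\widetilde\omega$. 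From here the remainder is identical to Theorem~\ref{thm:GHZ-rate}, now carried out with the $n$-block as the basic unit and $\widetilde\omega$ in place of $\omega$: taking $N\to\infty$ copies of the $n$-block, one player non-destructively measures the (super-)type, the parties partition the type class into $|\mathcal{K}|$ blocks of size $|\mathcal{S}|$, and Lemma~\ref{covering} is applied to the super-channel $x_{[m]}^n\mapsto\Tr_{A_j^{\prime n}}\chi_{x_{[m]}^n}^{A_{[m]\setminus j}^{\prime n}E^n}$. Because the block residual states factorise across the $N$ super-copies, the output is a product and Lemma~\ref{covering} applies with per-super-symbol information $I(X_{[m]}^n;A_{[m]\setminus j}^{\prime n}E^n)_{\widetilde\omega}$, so the covering condition reads $\log|\mathcal{S}|\gtrsim N\,I(X_{[m]}^n;A_{[m]\setminus j}^{\prime n}E^n)_{\widetilde\omega}$.

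The conjugate-basis measurements by the $m-1$ leaf players, the phase correction by the distinguished player $j$, and the coherent-gentle-measurement/fidelity argument then decouple $A_{[m]\setminus j}^{\prime n}E^n$ and leave the parties $\varepsilon$-close to $\log|\mathcal{K}|$ GHZ states, exactly as before. Counting gives $\frac1n\log|\mathcal{K}| = \frac1n\bigl(S(X_{[m]}^n)-I(X_{[m]}^n;A_{[m]\setminus j}^{\prime n}E^n)_{\widetilde\omega}\bigr)-R_{\text{CO}}^{\text{cq}} = \frac1n S(X_{[m]}^n|E^nA_{[m]\setminus j}^{\prime n})_{\widetilde\omega}-R_{\text{CO}}^{\text{cq}}$, using the type-class entropy estimates. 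Setting every $U^{(i)}=\1$ gives $\widetilde\omega=\omega^{\ox n}$ and recovers Theorem~\ref{thm:GHZ-rate} by additivity, so the bound is never worse; it improves precisely when the $x$-dependent unitaries lower the marginal $S(A_{[m]\setminus j}^{\prime n}E^n)$ (equivalently $I(X_{[m]}^n;A_{[m]\setminus j}^{\prime n}E^n)$) while the local-unitary-invariant average $\sum_{x_{[m]}^n}p(x_{[m]}^n)\,S\bigl(\chi_{x_{[m]}^n}^{A_{[m]\setminus j}^{\prime n}E^n}\bigr)$ remains fixed.

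The main obstacle I anticipate is not any single estimate but the bookkeeping that keeps the two scales straight: omniscience must be performed at the fine i.i.d.\ scale to secure the rate $R_{\text{CO}}^{\text{cq}}$, whereas the covering and decoupling must be performed at the coarse $n$-block scale so that the relevant quantity is the genuinely multi-letter $\frac1n S(X_{[m]}^n|E^nA_{[m]\setminus j}^{\prime n})_{\widetilde\omega}$. Making this rigorous requires checking that the block residual states $\Tr_{A_j^{\prime n}}\chi_{x_{[m]}^n}$ factorise across the $N$ super-copies (so Lemma~\ref{covering} applies with the stated per-super-symbol information), and confirming that conditioning on a typical super-type over $\mathcal{X}^n$ is compatible with the fine typicality already exploited during omniscience.
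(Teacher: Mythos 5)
Your proposal is correct and is essentially the argument the paper intends: Theorem \ref{thm:better-GHZ-rate} is stated with its proof omitted (the $\blacksquare$ in the statement), precisely because it follows from the proof of Theorem \ref{thm:GHZ-rate} together with the idea sketched in the preceding discussion --- after omniscience every player holds a copy of $x_{[m]}^n$ and can locally apply the controlled unitary $\sum_{x_{[m]}^n}\proj{x_{[m]}^n}\ox U^{(i)}_{x_{[m]}^n}$, after which the covering/decoupling phase is re-run with the $n$-block as a super-symbol on $\widetilde{\omega}$. Your bookkeeping (omniscience charged at the fine i.i.d.\ scale $R_{\text{CO}}^{\text{cq}}$, covering at the coarse block scale giving $\frac1n S(X_{[m]}^n|E^nA_{[m]\setminus j}^{\prime n})_{\widetilde{\omega}}$) matches this intended proof.
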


\medskip
Regarding the question of optimality of the GHZ distillation rates in Theorems \ref{thm:GHZ-rate} and \ref{thm:better-GHZ-rate}, we can elucidate it by considering a pure state coherent version of the example (\ref{eq:against-CO}):
\begin{equation}\begin{split}
\label{eq:against-CO-pure}
&\ket{\varphi}^{ABC} 
  = \frac{1}{\sqrt{dk^3}} \sum_{x=1}^d \sum_{\alpha,\beta,\gamma=1}^k \left(U_\alpha\ket{x}\ox\ket{\beta}\right)^{A} \\
  &\phantom{==========}
   \ox \left(V_\beta\ket{x}\ox\ket{\gamma}\right)^{B} \ox \left(W_\gamma\ket{x}\ox\ket{\alpha}\right)^{C}.
\end{split}\end{equation}
As before, it is evident that by a simple non-interactive communication protocol we can obtain a rate of $\log d$ GHZ states from this: every party measures $\alpha$, $\beta$ and $\gamma$, respectively, and broadcasts the value found to the others, who apply the appropriate local unitary $U_\alpha^\dagger$, $V_\beta^\dagger$ and $W_\gamma^\dagger$, respectively.

However, our present protocols are based on the GHZ correlation coming out of omniscience regarding a value obtained before the first communication, by identifying a local basis. Similar to the reasoning in Subsection \ref{subsec:n-i-limits}, it follows that the maximum correlation between one party and the other two, and hence any GHZ rate, is upper bounded by $1+\frac12\log d$ in the case of $k=2$ and the unitaries $\1$ and the quantum Fourier transform.

\section{Discussion}
The present results for secret key distillation with eavesdropper neatly 
generalise our earlier ones without eavesdropper \cite{FSAW:GHZ,GHZ:ISIT2020}, 
which are indeed 
recovered for the case of an initial product state $\rho^{A_1\ldots A_m}\ox\rho^E$,
in particular a trivial $E$-system. The difference is merely that the 
entropy $H(X_{[m]})$ after the omniscience protocol is replaced by
the conditional entropy $S(X_{[m]}|E)$, which makes sense as we need to sacrifice additional rate due to privacy amplification. 

Furthermore, we note that in contrast to the case of an initial pure state, discussed in \cite{GHZ:ISIT2020,FSAW:GHZ}, the attainable 
GHZ rate is smaller than the secret key rate, due to the difficulty of 
making the key distillation protocol coherent; concretely, the last step 
of disentangling (de-correlating) certain registers of the $m$ players from the GHZ state. 
Observe that this is an issue only for entanglement distillation compared
to the generation of secret key: for the latter, it is of no concern, and 
indeed can happen easily due to the employed protocol, that the secret key 
shared is correlated with other registers of the legal users generated 
during the protocol. 

To elucidate this further, consider the most general state of a 
perfect secret key between $m$ players and an eavesdropper, keeping track 
of all available information generated in a prior distillation protocol. 
This must be a pure $(m+1)$-partite state, where now each legal player 
has two registers, one for the key and one for residual quantum 
degrees of freedom (the ``shield''):
\begin{align*}
    \hspace{-.1cm}\ket{\psi}^{X_1B_1\ldots X_mB_mE} 
     = \frac{1}{\sqrt{d}}\sum_{x=1}^d \ket{x\ldots x}^{X_{[m]}}\ox \ket{\psi_x}^{B_1\ldots B_mE}, 
\end{align*}
where the $\ket{\psi_x}^{B_1\ldots B_mE}$ have the property that their 
reduced states $\psi_x^{E}$ on $E$ are all identical: $\psi_x^{E} = \sigma^E$ for all $x$. 
By the uniqueness of purifications up to local unitaries, this means that 
$\ket{\psi_x}^{B_1\ldots B_mE} = (U_x^{B_1\ldots B_m}\ox\1)\ket{\psi_0}^{B_1\ldots B_mE}$.
This form of state is known as \emph{pbit} \cite{pbit-PRL,pbit-IEEE,AH:mE}. 
It is well-known from these works that such a state, while 
containing $\log d$ bits of perfect secret key, can have arbitrarily small 
distillable entanglement; in fact, by compromising the quality of the key 
ever so slightly, the state after tracing out $E$ can be made to have positive
partial transpose (PPT) and hence be completely undistillable for entanglement. 
Thus, while it may seem that the $X$-registers do contain some kind of GHZ 
state, it is unavoidably decohered by the correlation with the $B$-registers, 
and in general there is no local way of undoing $U_x$.

\section*{Acknowledgment}
The authors thank Karla Gerstmann for a crucial discussion in the early stages of this project about the cost of compromised keys.

\bibliographystyle{ieeetr}
\bibliography{new-ghz.bib}

\begin{IEEEbiographynophoto}{Farzin Salek}
is a Walter Benjamin Fellow with the Department of Mathematics at the Technical University of Munich, Germany. He is currently a visiting researcher at the Perimeter Institute for Theoretical Physics, Waterloo, Canada. Previously, he was a Ph.D. student with the Quantum Information Group (GIQ) at the Universitat Aut\`onoma de Barcelona, Spain, and with the Department of Signal Theory and Communications at the Universitat Polit\`ecnica de Catalunya, Spain. He received his Ph.D. degree (excellent cum laude) in December 2020.  
He has been awarded the Walter Benjamin Postdoctoral Fellowship by the German Research Foundation (DFG) and a Global Marie Sk{\l}odowska-Curie (MSCA) Fellowship to conduct research at Stanford University.
\end{IEEEbiographynophoto}

\begin{IEEEbiographynophoto}{Andreas Winter}
received a Diploma degree in Mathematics from Freie Universit\"at Berlin, Germany, in 1997, and a Ph.D. degree from Fakult\"at f\"ur Mathematik, Universit\"at Bielefeld, Germany, in 1999.
He was Research Associate at the University of Bielefeld until 2001, and then with the Department of Computer Science at the University of Bristol, UK. In 2003, still with the University of Bristol, he was appointed Lecturer in Mathematics, and in 2006 Professor of Physics of Information. From 2007 to 2012 he was in addition a Visiting Research Professor with the Centre of Quantum Technologies at NUS, Singapore. 
Since 2012 he has been ICREA Research Professor with the Universitat Aut\`onoma de Barcelona, Spain. His research interests include quantum and classical Shannon theory, and discrete mathematics.

He is recipient, along with Charles H. Bennett, Igor Devetak, Aram W. Harrow and Peter W. Shor, of the 2017 Information Theory Society Paper Award. In 2022, he received an Alexander von Humboldt Research Prize, a Hans Fischer Senior Fellowship of Technische Universit\"at M\"unchen, and one of three 2022 QCMC International Quantum Awards.
\end{IEEEbiographynophoto}

\end{document}